\documentclass[11pt]{article}
\usepackage{amsmath,amsthm,amssymb,mathtools}
\usepackage[a4paper,margin=1in]{geometry}
\usepackage{hyperref}

\newtheorem{theorem}{Theorem}[section]
\newtheorem{proposition}[theorem]{Proposition}
\newtheorem{corollary}[theorem]{Corollary}
\newtheorem{lemma}[theorem]{Lemma}
\theoremstyle{remark}
\newtheorem{remark}[theorem]{Remark}
\newtheorem{example}[theorem]{Example}

\newcommand{\Pd}{\mathbb{P}_d}
\newcommand{\Tr}{\operatorname{Tr}}
\newcommand{\Pol}{\operatorname{Pol}}
\newcommand{\FM}{F^{\mathrm M}}
\newcommand{\FH}{F^{\mathrm H}}
\newcommand{\FU}{F^{\mathrm U}}
\newcommand{\FLE}{F^{\mathrm{LE}}}
\newcommand{\Fpol}[1]{F^{\mathrm{pol}}_{#1}}
\begin{document}
	
\title{Polar Fidelities, Holevo Bases, and Unitary Factors of Generalized Fidelity}
\author{Trung Dung Vuong\\
	High School for the Gifted, Vietnam National University Ho Chi Minh City\\
	153 Nguyen Chi Thanh, An Dong Ward, Ho Chi Minh City, Vietnam\\
	\texttt{vtdung@ptnk.edu.vn}}
\date{}
\maketitle
\begin{abstract}
	Motivated by several open problems in the Afham--Ferrie theory of generalized
fidelity, we study polar fidelities, Holevo bases, and unitary factors on
\(\Pd\), the cone of \(d\times d\) complex positive definite matrices. We prove
that, for every fixed pair \(P,Q\in\Pd\), the one-sided polar fidelities
\(x\mapsto F_{P^x}(P,Q)\) and \(x\mapsto F_{Q^x}(P,Q)\), as well as their
symmetrization, are nondecreasing on \((-\infty,1]\) and nonincreasing on
\([1,\infty)\). Hence the polar paths realize exactly the interval
\([\FM(P,Q),\FU(P,Q)]\) on \([-1,1]\), yielding pointwise, generally
pair-dependent realizations of all \(z\)-fidelity values with \(z\ge1/2\) and
of the Log-Euclidean fidelity as generalized fidelities. We also show that for
\(d\ge2\) and \(0<z<1/2\), such realization fails in general, even for interior
fidelities. We further solve the fixed-pair Holevo-base equation
\(F_R(P,Q)=\FH(P,Q)\), classify all Holevo bases, and classify exactly which
unitary factors of generalized fidelity can arise from a base \(R\). These are
precisely the unitaries \(W\) for which \(P^{-1/2}Q^{1/2}W\) is similar to a
positive definite matrix. This recovers the special-unitary constraint and
disproves the global reverse inclusion \(SU(d)\) for \(d\ge2\).
\end{abstract}
\maketitle

\section{Introduction}
\label{sec:intro}

Fix a dimension \(d\in\mathbb N\), and let \(\Pd\) denote the cone of \(d\times d\)
complex positive definite matrices. For \(P,Q,R\in\Pd\), Afham and Ferrie introduced the
generalized fidelity
\[
F_R(P,Q):=
\Tr\!\left[\bigl(R^{1/2}PR^{1/2}\bigr)^{1/2}R^{-1}
\bigl(R^{1/2}QR^{1/2}\bigr)^{1/2}\right],
\]
a base-dependent fidelity whose special choices of \(R\) recover several standard
quantum fidelities \cite{AfhamFerrie2024}. The relevant specializations will be
recalled below. In general, \(F_R(P,Q)\) need not be real-valued; by contrast, the
one-sided polar quantities studied below are positive real.

The same framework gives the one-sided polar paths \(x\mapsto F_{P^x}(P,Q)\) and
\(x\mapsto F_{Q^x}(P,Q)\), together with the symmetrized family
\[
\Fpol{x}(P,Q):=\frac{F_{P^x}(P,Q)+F_{Q^x}(P,Q)}{2},
\qquad x\in\mathbb R.
\]
Since \(P^x,Q^x\in\Pd\) for every real \(x\), this family is well defined on all of
\(\mathbb R\). On \([-1,1]\), it recovers Matsumoto, Holevo, and Uhlmann fidelities at
\(x=-1,0,1\), respectively. 

This should be distinguished from other recent \(\alpha\)-\(z\)-fidelity frameworks
\cite{AudenaertDatta2015,DinhLeVuong2024,YanSunDuTang2025}. We do not introduce a new
\(\alpha\)-\(z\) fidelity. Rather, we study base selection and fixed-pair structure inside
the Afham--Ferrie generalized-fidelity framework, especially along the polar paths and in
the equality problem \(F_R(P,Q)=\FH(P,Q)\).

We now spell out how the results of this paper relate to the open problems in
\cite[Sec.~10]{AfhamFerrie2024}. The status is as follows.

\begin{enumerate}
	\item \emph{Open Problem~4: monotonicity of polar fidelities.}
	We answer this problem affirmatively and in a stronger form. For every fixed
	pair \(P,Q\in\Pd\), the one-sided polar paths
	\(x\mapsto F_{P^x}(P,Q)\) and \(x\mapsto F_{Q^x}(P,Q)\) are nondecreasing on
	\((-\infty,1]\) and nonincreasing on \([1,\infty)\). Consequently, the
	symmetrized map \(x\mapsto\Fpol{x}(P,Q)\) has the same two-range monotonicity,
	attains a global maximum at \(x=1\), and restricts on \([-1,1]\) to a monotone
	bridge between \(\FM\), \(\FH\), and \(\FU\). 
	
	\item \emph{Open Problem~3: recovery of \(z\)-fidelities and Log-Euclidean fidelity.}
	We give a pointwise positive-definite answer. For every \(z\ge\frac12\), and also
	for the Log-Euclidean fidelity, each fixed-pair value is realized as a generalized
	fidelity along each of the one-sided polar paths \(R=P^x\) and \(R=Q^x\), with
	generally pair-dependent and nonunique parameters. Conversely, for \(d\ge2\) and
	\(0<z<\frac12\), such pointwise realization fails in general, even if one allows
	interior fidelities. Thus the pointwise realization problem has an affirmative
	answer for \(z\ge\frac12\) and a negative answer in general below this threshold.
	We do not claim a universal base, nor a canonical rule \(R=R_z(P,Q)\), valid for
	all pairs.
	
	\item \emph{Open Problem~5: other bases for Holevo fidelity.}
	We solve the fixed-pair Holevo-base problem. First, we classify all bases satisfying
	the natural sufficient polar condition
	\begin{equation}
		\Pol(R^{1/2}P^{1/2})=\Pol(R^{1/2}Q^{1/2}).
		\label{eq:holevo-polar-condition}
	\end{equation}
	We then remove this sufficient condition and solve the full fixed-pair equation
	\[
	F_R(P,Q)=\FH(P,Q)
	\]
	as an equality of complex numbers. The polar condition turns out to describe only
	the distinguished stratum \(W=I\) inside the full solution set. As an additional
	global consequence, the identity is, up to positive scaling, the unique universal
	Holevo base.
	
	\item \emph{Open Problem~7: unitary factors and \(SU(d)\).}
	We answer the reverse-inclusion question negatively. More precisely, for each fixed
	pair \(P,Q\in\Pd\), we classify exactly which unitary factors of generalized fidelity
	can arise from some base \(R\): a unitary \(W\) can arise if and only if
	\[
	P^{-1/2}Q^{1/2}W\sim\Pd.
	\]
	This recovers the known special-unitary constraint and shows that, for \(d\ge2\),
	not every element of \(SU(d)\) can arise as a unitary factor of generalized
	fidelity.
	
	\item \emph{Open Problems~1, 2, and 6.}
	We do not solve the full data-processing problem for generalized Bures distance
	(Open Problem~1), nor the convexity and SDP questions (Open Problems~2 and~6).
	We only prove a partial fidelity-form data-processing result for the symmetrized
	polar fidelities under commutative-interface hypotheses.
\end{enumerate}

Two mechanisms behind the main results are worth emphasizing. First, the monotonicity
theorem yields exact realization of the full interval
\[
[\FM(P,Q),\FU(P,Q)]
\]
by the one-sided and symmetrized polar paths on \([-1,1]\); this is the mechanism behind
the positive part of the recovery theorem for \(z\)-fidelities and for the Log-Euclidean
fidelity. Second, the same fixed-pair parameterization that describes Holevo bases also
controls the unitary factors of generalized fidelity.

All structural results below are stated for strictly positive definite matrices. Negative powers
and invertible polar decomposition are essential throughout. Positive semidefinite matrices enter
only as an auxiliary limiting device in the proof of
Theorem~\ref{thm:recover-z-from-generalized}(iii); no semidefinite extension of generalized
fidelity or of the \(x\)-polar family is claimed here.

The paper is organized as follows. Section~\ref{sec:prelim} collects notation and basic
identities for generalized fidelity and polar decomposition.
Section~\ref{sec:monotonicity} proves the two-range monotonicity of the one-sided and
symmetrized \(x\)-polar families. Section~\ref{sec:dpi} proves exact realization of the
interval \([\FM(P,Q),\FU(P,Q)]\), derives the pointwise recovery of \(z\)-fidelities and of
the Log-Euclidean fidelity, and establishes the commutative-interface data-processing
result.  Section~\ref{sec:holevo-unitaries} studies Holevo bases and unitary factors:
first the polar slice of Holevo bases, then the full fixed-pair Holevo-base
equation, and finally the unitary-factor classification and its special-unitary
consequences.
\section{Preliminaries}
\label{sec:prelim}

For \(R,P,Q\in\Pd\), the generalized fidelity of Afham--Ferrie is
\begin{equation}
	F_R(P,Q):=
	\Tr\!\left[\bigl(R^{1/2}PR^{1/2}\bigr)^{1/2}R^{-1}
	\bigl(R^{1/2}QR^{1/2}\bigr)^{1/2}\right].
	\label{eq:generalized-fidelity}
\end{equation}
An equivalent unitary-factor formula is
\begin{equation}
	F_R(P,Q)=\Tr\!\bigl[Q^{1/2}U_QU_P^*P^{1/2}\bigr],
	\label{eq:generalized-fidelity-unitary}
\end{equation}
where
\[
U_P:=\Pol(P^{1/2}R^{1/2}),
\qquad
U_Q:=\Pol(Q^{1/2}R^{1/2}).
\]
The special choices \(R=P,Q\), \(R=I\), and \(R=P^{-1},Q^{-1}\) recover the
standard named fidelities: Uhlmann fidelity
\cite{Uhlmann1976,Jozsa1994,NielsenChuang2001,Watrous2018}, Holevo fidelity
\cite{Holevo1972,Wilde2018}, and Matsumoto fidelity
\cite{Matsumoto2010,CreeSikora2020}. More precisely,
\[
\begin{aligned}
	\FU(P,Q)&=F_P(P,Q)=F_Q(P,Q),\\
	\FH(P,Q)&=F_I(P,Q),\\
	\FM(P,Q)&=F_{P^{-1}}(P,Q)=F_{Q^{-1}}(P,Q).
\end{aligned}
\]
For \(z>0\), we also write
\[
F_z(P,Q):=\Tr\!\left[\left(P^{\frac1{2z}}Q^{\frac1{2z}}\right)^z\right],
\]
where the power is the principal matrix power. We shall write
\[
T\sim\Pd
\]
to mean that the invertible matrix \(T\) is similar to a positive definite matrix, i.e.,
there exists \(S\in GL_d(\mathbb C)\) such that \(T=SDS^{-1}\) for some \(D\in\Pd\).
Equivalently, \(T\) is diagonalizable and has strictly positive real spectrum.
We shall also use the notation
\[
S^{-*}:=(S^{-1})^*
\qquad (S\in GL_d(\mathbb C)).
\]

The limiting Log-Euclidean fidelity, obtained as the \(z\to\infty\) limit of the
\(z\)-fidelities, is
\[
\FLE(P,Q):=\Tr\!\left[\exp\!\left(\frac{\log P+\log Q}{2}\right)\right],
\]
see \cite{BhatiaGaubertJain2019,AfhamFerrie2024}.
We shall use the following standard positive representative of the \(z\)-fidelity.
\begin{lemma} 
	\label{lem:z-positive-representation}
	For every \(z>0\) and every \(P,Q\in\Pd\),
	\[
	F_z(P,Q)
	=
	\Tr\!\left[\left(P^{\frac1{4z}}Q^{\frac1{2z}}P^{\frac1{4z}}\right)^z\right]
	=
	\Tr\!\left[\left(Q^{\frac1{4z}}P^{\frac1{2z}}Q^{\frac1{4z}}\right)^z\right].
	\]
	In particular, \(F_z(P,Q)\) is a positive real number. Moreover, the displayed
	positive formulas define the unique continuous extension of \(F_z\) to positive
	semidefinite \(P,Q\).
\end{lemma}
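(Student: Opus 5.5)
The plan is to work with \(A:=P^{\frac1{2z}}Q^{\frac1{2z}}\) and show that it is similar to the positive definite matrix \(B:=P^{\frac1{4z}}Q^{\frac1{2z}}P^{\frac1{4z}}\). Indeed, with \(S:=P^{\frac1{4z}}\) we have
\[
A = S\,\bigl(P^{\frac1{4z}}Q^{\frac1{2z}}P^{\frac1{4z}}\bigr)\,S^{-1}=SBS^{-1}.
\]
Thus \(A\sim\Pd\) in the sense introduced above: it is diagonalizable with strictly positive spectrum. In particular, the principal power \(A^z\) is well defined. Since the principal power is a primary matrix function, it commutes with similarity, so \(A^z=SB^zS^{-1}\). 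To justify this I would either use the holomorphic functional calculus (the principal branch of \(\lambda\mapsto\lambda^z\) is holomorphic near the spectrum, which lies in \((0,\infty)\)), or argue directly: diagonalize \(B=V\Lambda V^*\), so that \(A=(SV)\Lambda(SV)^{-1}\) and \(A^z=(SV)\Lambda^z(SV)^{-1}\). Taking traces and using cyclicity gives \(F_z(P,Q)=\Tr[B^z]\), which is the first formula. For the second formula, I would use \(S'=Q^{-\frac1{4z}}\) to conjugate \(A\) to \(Q^{\frac1{4z}}P^{\frac1{2z}}Q^{\frac1{4z}}\); explicitly,
\[
Q^{\frac1{4z}}AQ^{-\frac1{4z}}=Q^{\frac1{4z}}P^{\frac1{2z}}Q^{\frac1{4z}}.
\]
Positivity is then immediate, because \(B\in\Pd\) implies \(\Tr[B^z]>0\).

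For the continuous extension, I would note that on positive semidefinite matrices the maps \(X\mapsto X^{\frac1{4z}}\) and \(X\mapsto X^{\frac1{2z}}\) are continuous, since they are continuous functions on \([0,\infty)\) applied via the Hermitian functional calculus. Consequently \((P,Q)\mapsto P^{\frac1{4z}}Q^{\frac1{2z}}P^{\frac1{4z}}\) is continuous into the positive semidefinite cone. Likewise \(Y\mapsto\Tr[Y^z]=\sum_i\lambda_i(Y)^z\) is continuous on the positive semidefinite cone, because eigenvalues depend continuously and \(t\mapsto t^z\) is continuous on \([0,\infty)\) for \(z>0\). Hence the displayed formula is continuous on all positive semidefinite pairs and agrees with \(F_z\) on \(\Pd\times\Pd\). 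Uniqueness of the extension follows from density of \(\Pd\times\Pd\) in the positive semidefinite pairs, for instance via \(P+\varepsilon I\).

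The only genuinely delicate step is the similarity-invariance of the principal power. Care is needed because \(A\) is not normal and the principal power is defined only off the branch cut. This is handled by the explicit diagonalization \(A=(SV)\Lambda(SV)^{-1}\) with \(\Lambda>0\), which makes the definition of \(A^z\) unambiguous. I would also remark that the two positive formulas agree on semidefinite pairs, by continuity, so the extension is the same whichever formula is used.
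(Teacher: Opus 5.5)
Your proof is correct and takes the same route as the paper. Both arguments show that \(P^{1/(2z)}Q^{1/(2z)}\) is similar to the sandwiched positive definite matrices, use that the principal power is preserved under similarity, and obtain the semidefinite extension from continuity of the functional calculus. You also supply details the paper leaves implicit: the explicit diagonalization justifying \(A^z=SB^zS^{-1}\), the density argument for uniqueness, and the agreement of the two formulas on semidefinite pairs.
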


\begin{proof}
	This is the \(\alpha=\frac12\) instance of the standard positive trace functional
	underlying the \(\alpha\)-\(z\) Rényi divergences \cite{AudenaertDatta2015}. Equivalently,
	\(P^{1/(2z)}Q^{1/(2z)}\) is similar to
	\(P^{1/(4z)}Q^{1/(2z)}P^{1/(4z)}\), and principal powers are preserved under similarity.
	The positive-semidefinite extension follows from the displayed positive formulas and
	continuity of the functional calculus \(X\mapsto X^t\) for \(t>0\).
\end{proof}
We shall also use the standard variational formula for Uhlmann fidelity and its immediate
generalized-fidelity consequence.
\begin{lemma} 
	\label{lem:uhlmann-variational}
	For every \(A,B\in\Pd\),
	\[
	\FU(A,B)=\max_{U\in U(d)}
	\left|\Tr\!\left(B^{1/2}UA^{1/2}\right)\right|.
	\]
	Moreover, for every \(R\in\Pd\),
	\[
	|F_R(A,B)|\le \FU(A,B).
	\]
\end{lemma}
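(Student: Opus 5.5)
The plan is to prove the two parts of the lemma in order. The variational formula comes from the polar decomposition and a trace inequality for unitaries. The generalized-fidelity bound then follows from the unitary-factor formula \eqref{eq:generalized-fidelity-unitary}.

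\emph{Variational formula.} Set \(X:=A^{1/2}B^{1/2}\). By cyclicity, \(\Tr(B^{1/2}UA^{1/2})=\Tr(UX)\) for every \(U\in U(d)\). Take the polar decomposition \(X=V|X|\), with \(V\) unitary and \(|X|=(X^*X)^{1/2}=(B^{1/2}AB^{1/2})^{1/2}\). Then
\[
\Tr(UX)=\Tr\bigl(|X|^{1/2}\,UV\,|X|^{1/2}\bigr).
\]
Diagonalize \(|X|\) in an orthonormal basis with eigenvalues \(s_j\ge0\). The trace above equals \(\sum_j s_j (W)_{jj}\), where \(W\) denotes \(UV\) written in that basis. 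Each diagonal entry of a unitary satisfies \(|W_{jj}|\le1\), so
\[
|\Tr(UX)|\le\sum_j s_j=\Tr|X|.
\]
The choice \(U=V^*\) gives equality, so the maximum is attained and equals \(\Tr(B^{1/2}AB^{1/2})^{1/2}\). This is \(\FU(A,B)=F_B(A,B)\). To see it, take \(R=B\) in \eqref{eq:generalized-fidelity}. Then \(R^{1/2}BR^{1/2}=B^2\), whose square root is \(B\). The trace becomes \(\Tr[(B^{1/2}AB^{1/2})^{1/2}B^{-1}B]=\Tr(B^{1/2}AB^{1/2})^{1/2}\). The same expression is symmetric in \(A,B\), because \(A^{1/2}B^{1/2}\) and \(B^{1/2}A^{1/2}\) are adjoint to each other, so \(|X|\) and \(|X^*|\) are unitarily similar and have the same trace. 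This agrees with the convention \(\FU=F_P=F_Q\) recalled above.

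\emph{Bound on \(F_R\).} Use \eqref{eq:generalized-fidelity-unitary} with \(P=A\) and \(Q=B\). This gives \(F_R(A,B)=\Tr[B^{1/2}U_BU_A^*A^{1/2}]\), and \(U_BU_A^*\in U(d)\). The first part then yields \(|F_R(A,B)|\le\FU(A,B)\) immediately.

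The main obstacle is modest: it is confirming that \eqref{eq:generalized-fidelity-unitary} holds as stated, with \(\Pol(Y)\) the unitary factor of \(Y=\Pol(Y)|Y|\). If needed, I would verify it directly from \eqref{eq:generalized-fidelity}. Write \(Y_P:=P^{1/2}R^{1/2}=U_P|Y_P|\), where \(|Y_P|=(R^{1/2}PR^{1/2})^{1/2}\), so that \(|Y_P|=U_P^*P^{1/2}R^{1/2}\). Similarly, \(|Y_Q|=U_Q^*Q^{1/2}R^{1/2}\), and since \(|Y_Q|\) is Hermitian, also \(|Y_Q|=R^{1/2}Q^{1/2}U_Q\). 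Substituting,
\[
\Tr\bigl[|Y_P|\,R^{-1}\,|Y_Q|\bigr]
=\Tr\bigl[U_P^*P^{1/2}R^{1/2}R^{-1}R^{1/2}Q^{1/2}U_Q\bigr]
=\Tr\bigl[Q^{1/2}U_QU_P^*P^{1/2}\bigr]
\]
by cyclicity. This confirms the formula and completes the plan.
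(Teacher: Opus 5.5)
Your proof is correct. It differs from the paper mainly in being self-contained: the paper gives no argument, citing Watrous for the variational formula and Afham--Ferrie (Property~11, Appendix~B.10) for the bound \(|F_R(A,B)|\le\FU(A,B)\).

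You instead derive the variational formula directly from the polar decomposition of \(A^{1/2}B^{1/2}\) and the fact that diagonal entries of a unitary have modulus at most one. You then identify the resulting maximum \(\Tr(B^{1/2}AB^{1/2})^{1/2}\) with the paper's definition \(\FU=F_P=F_Q\) by evaluating \(F_B(A,B)\). The second part follows in one line from \eqref{eq:generalized-fidelity-unitary}, since \(U_BU_A^*\) is unitary.

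What your route buys is transparency and completeness. It shows that the bound on \(F_R\) is nothing more than Uhlmann's variational principle applied to the unitary factor. Your direct check of \eqref{eq:generalized-fidelity-unitary} from \eqref{eq:generalized-fidelity}, via \(|Y_P|=U_P^*P^{1/2}R^{1/2}\) and \(|Y_Q|=R^{1/2}Q^{1/2}U_Q\), also fills in an identity the paper states without proof but relies on throughout Section~\ref{sec:holevo-unitaries}. The citation route buys only brevity.
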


\begin{proof}
	The first formula is the standard trace-norm variational characterization of Uhlmann
	fidelity \cite{Watrous2018}. The second bound is precisely
	\cite[Property~11 and Appendix~B.10]{AfhamFerrie2024}.
\end{proof}

\begin{lemma}
	\label{lem:base-scaling}
	\label{lem:joint-homogeneity}
	For all \(c,a,b>0\) and all \(P,Q,R\in\Pd\),
	\[
	F_{cR}(P,Q)=F_R(P,Q),
	\qquad
	F_R(aP,bQ)=\sqrt{ab}\,F_R(P,Q).
	\]
	Consequently, for every
	\(\mathcal F\in\{\FM,\FH,\FU,F_z,\FLE\}\),
	\[
	\mathcal F(aP,bQ)=\sqrt{ab}\,\mathcal F(P,Q).
	\]
\end{lemma}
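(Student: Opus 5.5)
The plan is to verify both identities directly from the defining formula \eqref{eq:generalized-fidelity}, using only two facts about the principal square root on \(\Pd\): it is positively homogeneous, \((cX)^{1/2}=c^{1/2}X^{1/2}\) for \(c>0\) and \(X\in\Pd\), and it is unique. We then read off the consequence for each named fidelity from its explicit formula.

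\emph{Base scaling.} Replace \(R\) by \(cR\). Then \((cR)^{1/2}=c^{1/2}R^{1/2}\), so
\[
(cR)^{1/2}P(cR)^{1/2}=c\,R^{1/2}PR^{1/2},
\]
and its square root is \(c^{1/2}(R^{1/2}PR^{1/2})^{1/2}\). The same holds with \(Q\) in place of \(P\). The middle factor becomes \((cR)^{-1}=c^{-1}R^{-1}\). The scalar factors therefore multiply to \(c^{1/2}\cdot c^{-1}\cdot c^{1/2}=1\), which gives \(F_{cR}(P,Q)=F_R(P,Q)\).

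\emph{Joint homogeneity.} Since \(R^{1/2}(aP)R^{1/2}=a\,R^{1/2}PR^{1/2}\), the left square root picks up a factor \(a^{1/2}\). Likewise the right one picks up \(b^{1/2}\), while \(R^{-1}\) is unchanged. Linearity of the trace then gives the factor \(\sqrt{ab}\).

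\emph{The named fidelities.} For \(\FH=F_I\) the claim is immediate, since the base \(I\) does not depend on \((P,Q)\). For \(\FU\) and \(\FM\) the base does depend on the pair, and this is the one point needing care. Write \(\FU(aP,bQ)=F_{aP}(aP,bQ)\) and use base scaling to replace the base \(aP\) by \(P\). This gives \(F_P(aP,bQ)=\sqrt{ab}\,F_P(P,Q)=\sqrt{ab}\,\FU(P,Q)\). For \(\FM\) the same argument works with the base \((aP)^{-1}=a^{-1}P^{-1}\), which scales to \(P^{-1}\). Alternatively, one can verify this from the closed forms.

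For \(F_z\), we use the positive representation in Lemma~\ref{lem:z-positive-representation}. We have
\[
(aP)^{1/(4z)}(bQ)^{1/(2z)}(aP)^{1/(4z)}=(ab)^{1/(2z)}\,P^{1/(4z)}Q^{1/(2z)}P^{1/(4z)}.
\]
Taking the \(z\)-th power of this positive definite matrix gives the scalar \((ab)^{1/2}\), by homogeneity of the positive functional calculus.

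For \(\FLE\), we have \(\log(aP)=(\log a)I+\log P\), and similarly for \(bQ\). Hence the exponent equals \(\tfrac12\log(ab)\,I+\tfrac12(\log P+\log Q)\). The scalar term is central, so it commutes with the rest, and the exponential factors as \(\sqrt{ab}\,\exp(\tfrac12(\log P+\log Q))\).

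The only step that is not purely mechanical is handling the pair-dependent bases of \(\FU\) and \(\FM\). That step is resolved by applying the base-scaling identity first, so no real obstacle is expected.
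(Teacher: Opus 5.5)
Your proof is correct and takes essentially the same route as the paper. The paper cites Afham--Ferrie for the two \(F_R\) identities and for the named fidelities, and calls the \(F_z\) and \(\FLE\) cases immediate from the definitions; you carry out these verifications explicitly, including the base-scaling step \(F_{aP}=F_P\), \(F_{(aP)^{-1}}=F_{P^{-1}}\), which the pair-dependent bases of \(\FU\) and \(\FM\) genuinely require and which the paper leaves implicit.
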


\begin{proof}
	The identities for \(F_R\) are precisely the scaling property of generalized fidelity
	\cite[Sec.~III.A, Property~10]{AfhamFerrie2024}. The homogeneity of
	\(\FU,\FH,\FM\) follows from the standard reductions to named fidelities
	\cite[Sec.~III.A, Property~13]{AfhamFerrie2024}. The homogeneity of \(F_z\) and
	\(\FLE\) is immediate from their definitions.
\end{proof}

\begin{remark}
	\label{rem:reduction-to-faithful-states}
	Let \(P,Q\in\Pd\), set
	\[
	p:=\Tr P,\qquad q:=\Tr Q,\qquad
	\rho:=P/p,\qquad \sigma:=Q/q.
	\]
	Then, for every \(\mathcal F\in\{\FM,\FH,\FU,F_z,\FLE\}\),
	\[
	\mathcal F(P,Q)=\sqrt{pq}\,\mathcal F(\rho,\sigma).
	\]
	Thus pointwise inequalities and data-processing inequalities proved for faithful states
	extend immediately to arbitrary \(P,Q\in\Pd\), as long as the output matrices remain
	positive definite.
\end{remark}

In this paper, by a normalized monotone quantum fidelity on faithful states we mean a
function \(G\) which is positive, symmetric, satisfies \(G(\rho,\rho)=1\), agrees with the
classical fidelity on commuting faithful states, and is monotone increasing under quantum
channels whenever the two output states are faithful.

\begin{lemma}
	\label{lem:extremal-monotone-fidelity-bounds}
	Let \(G\) be a normalized monotone quantum fidelity on faithful states. Then, for all
	faithful states \(\rho,\sigma\),
	\[
	\FM(\rho,\sigma)\le G(\rho,\sigma)\le \FU(\rho,\sigma).
	\]
\end{lemma}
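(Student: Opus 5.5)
This is the Matsumoto–Uhlmann extremality theorem, and I would prove it by the standard two-sided argument. The upper bound comes from Uhlmann's theorem via purifications. The lower bound comes from the Matsumoto reverse-test construction, which reduces \(\rho,\sigma\) to commuting classical distributions. Both halves use only the defining properties of \(G\): monotonicity under channels with faithful outputs, and agreement with the classical fidelity on commuting faithful states.

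\emph{Upper bound.} I would use the measurement characterization of Uhlmann fidelity (Fuchs--Caves): there is a POVM, and in fact a projective measurement in the eigenbasis of the operator geometric mean \(\rho^{-1}\#\sigma\), whose output distributions \(p,q\) satisfy \(\sum_i\sqrt{p_iq_i}=\FU(\rho,\sigma)\). Call this measurement channel \(\mathcal M\); it maps into diagonal states. Since \(\rho,\sigma\) are faithful, every outcome with nonzero POVM element has \(p_i,q_i>0\). Discarding null outcomes, \(\mathcal M(\rho),\mathcal M(\sigma)\) are commuting faithful states on the support. Monotonicity and classical agreement then give
\[
G(\rho,\sigma)\le G(\mathcal M(\rho),\mathcal M(\sigma))=\sum_i\sqrt{p_iq_i}=\FU(\rho,\sigma).
\]
The delicate point is faithfulness of the outputs, which holds after restricting the output space to the outcomes that actually occur.

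\emph{Lower bound.} Here I would use Matsumoto's reverse test. Set \(T:=\rho^{-1/2}\sigma\rho^{-1/2}\) with spectral decomposition \(T=\sum_j \lambda_j E_j\). Define classical distributions \(p_j:=\Tr(\rho^{1/2}E_j\rho^{1/2})>0\) and \(q_j:=\lambda_j p_j>0\), and a cq channel \(\Lambda(|j\rangle\langle j|)=\rho^{1/2}E_j\rho^{1/2}/p_j\), extended by dephasing. This gives \(\Lambda(p)=\rho\) and \(\Lambda(q)=\rho^{1/2}T\rho^{1/2}=\sigma\). Then, since \(\rho,\sigma\) are faithful,
\[
G(\rho,\sigma)\ge G(p,q)=\sum_j\sqrt{p_jq_j}=\Tr\bigl(\rho^{1/2}T^{1/2}\rho^{1/2}\bigr)=\Tr(\rho\#\sigma)=\FM(\rho,\sigma).
\]
The last equality identifies the Matsumoto fidelity \(F_{\rho^{-1}}(\rho,\sigma)\) with \(\Tr(\rho\#\sigma)\). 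I would verify this directly from \eqref{eq:generalized-fidelity} with \(R=\rho^{-1}\), or cite \cite{Matsumoto2010,CreeSikora2020}.

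The main obstacle is bookkeeping rather than ideas. Classical distributions must be regarded as commuting faithful states, so every \(p_j,q_j\) must be strictly positive; this holds because \(E_j\neq0\), \(\rho\) is invertible, and \(\lambda_j>0\). The identification of \(\FM\) as defined here with \(\Tr(\rho\#\sigma)\) must also be done carefully.
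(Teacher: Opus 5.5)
Your argument is correct, but it differs from the paper, which gives no argument of its own and simply invokes the extremality theorem of Matsumoto and Cree--Sikora (\(\FM\) is the minimal and \(\FU\) the maximal monotone extension of the classical fidelity). What you wrote is essentially the proof behind those citations.

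For the upper bound you use the Fuchs--Caves measurement in an eigenbasis of \(M=\rho^{-1}\#\sigma\). There \(M\rho M=\sigma\) gives \(q_k=m_k^2p_k\), hence \(\sum_k\sqrt{p_kq_k}=\Tr(M\rho)=\FU(\rho,\sigma)\). For the lower bound you use Matsumoto's reverse test, together with the direct check \(F_{\rho^{-1}}(\rho,\sigma)=\Tr\bigl[\rho\,(\rho^{-1/2}\sigma\rho^{-1/2})^{1/2}\bigr]=\Tr(\rho\#\sigma)\), which holds since \(R^{1/2}\rho R^{1/2}=I\) for \(R=\rho^{-1}\).

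Your route makes explicit the compatibility check that the citation leaves implicit. It shows that only monotonicity and classical agreement are used; positivity, symmetry and \(G(\rho,\rho)=1\) play no role. It also shows that both auxiliary channels have faithful inputs and outputs, as the paper's restricted notion of monotonicity requires.

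One bookkeeping refinement. The paper fixes \(d\) and elsewhere uses channels with outputs in \(\Pd\). So in both steps use rank-one eigenprojections (an orthonormal eigenbasis, repeated eigenvalues allowed) rather than grouping by distinct eigenvalues. Then all classical states are positive definite diagonal \(d\times d\) matrices. Your step of discarding null outcomes then becomes vacuous, because \(p_k=\langle e_k|\rho|e_k\rangle>0\) and \(q_k>0\) automatically.
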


\begin{proof}
	This is the extremal characterization of Matsumoto and Uhlmann fidelities among
	monotone quantum fidelities extending the classical fidelity:
	\(\FM\) is the minimal such extension and \(\FU\) is the maximal one
	\cite{Matsumoto2010,Matsumoto2014,CreeSikora2020}. Applying this characterization to
	\(G\) gives the stated inequalities.
\end{proof}

For \(P,Q\in\Pd\), define
\begin{align}
	\Phi_P(x)
	&:=F_{P^x}(P,Q)
	=\Tr\!\left[P^{\frac{1-x}{2}}
	\left(P^{\frac{x}{2}}QP^{\frac{x}{2}}\right)^{1/2}\right],\\
	\Phi_Q(x)
	&:=F_{Q^x}(P,Q)
	=\Tr\!\left[Q^{\frac{1-x}{2}}
	\left(Q^{\frac{x}{2}}PQ^{\frac{x}{2}}\right)^{1/2}\right].
\end{align}
\begin{remark}
	In general the generalized fidelity \(F_R(P,Q)\) need not be real-valued. By contrast,
	for every \(x\in\mathbb R\), both \(\Phi_P(x)\) and \(\Phi_Q(x)\) are positive real
	numbers, because for \(A,B\in\Pd\),
	\[
	\Tr(AB)=\Tr(A^{1/2}BA^{1/2})>0.
	\]
\end{remark}
The symmetrized \(x\)-polar fidelity is
\begin{equation}
	\Fpol{x}(P,Q):=\frac{\Phi_P(x)+\Phi_Q(x)}{2},
	\qquad x\in\mathbb R.
	\label{eq:x-polar-def}
\end{equation}
In particular, \(\Fpol{x}(P,Q)\) is defined for every real \(x\). The main result below shows
that the one-sided and symmetrized polar families are nondecreasing on \((-\infty,1]\)
and nonincreasing on \([1,\infty)\). For the interpolation statements, we will mainly use
the increasing branch on \([-1,1]\).
Thus
\[
\Fpol{-1}(P,Q)=\FM(P,Q),
\qquad
\Fpol{0}(P,Q)=\FH(P,Q),
\qquad
\Fpol{1}(P,Q)=\FU(P,Q).
\]
We shall use the following elementary facts about polar decomposition and products of
positive definite matrices.
\begin{lemma}
	\label{lem:polar-basic}
	\label{lem:product-positive-iff-commute}
	Let \(X\in GL_d(\mathbb C)\) and \(U,V\in U(d)\). Then
	\[
	\Pol(UXV)=U\,\Pol(X)\,V,
	\qquad
	\Pol(X^*)=\Pol(X)^*.
	\]
	Moreover, for every unitary \(W\in U(d)\),
	\[
	\Pol(X)=W
	\iff
	W^*X\in\Pd
	\iff
	XW^*\in\Pd.
	\]
	Finally, for \(A,B\in\Pd\),
	\[
	AB\in\Pd\iff AB=BA.
	\]
\end{lemma}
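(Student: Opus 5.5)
We prove the three assertions of Lemma~\ref{lem:polar-basic} in turn. Throughout, \(\Pol(X)\) denotes the unitary factor of the unique polar decomposition \(X=\Pol(X)\,|X|\) with \(|X|=(X^*X)^{1/2}\in\Pd\). Uniqueness holds because \(X\) is invertible.

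\emph{Equivariance.} Write \(X=U_0|X|\) with \(U_0=\Pol(X)\). Then
\[
UXV=(UU_0V)(V^*|X|V).
\]
Here \(UU_0V\) is unitary and \(V^*|X|V\in\Pd\), so by uniqueness \(\Pol(UXV)=U\Pol(X)V\). For the adjoint, \(X^*=|X|U_0^*=U_0^*(U_0|X|U_0^*)\). Again \(U_0|X|U_0^*\in\Pd\), so uniqueness gives \(\Pol(X^*)=U_0^*\).

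\emph{Characterization of \(\Pol(X)=W\).} If \(\Pol(X)=W\), then \(W^*X=|X|\in\Pd\). Conversely, if \(W^*X=:D\in\Pd\), then \(X=WD\) is a polar decomposition, and uniqueness forces \(W=\Pol(X)\).

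For the right-hand condition, note that \(XW^*\in\Pd\) holds if and only if \(W(W^*X)W^*\in\Pd\). This is a unitary conjugation of \(W^*X\), so it is equivalent to \(W^*X\in\Pd\).

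\emph{Products of positive definite matrices.} Suppose \(AB\in\Pd\). Then \(AB\) is Hermitian, so
\[
AB=(AB)^*=B^*A^*=BA.
\]
Conversely, suppose \(AB=BA\). Then \(AB\) is Hermitian, since \((AB)^*=BA=AB\). Moreover, \(AB\) is similar to \(A^{1/2}BA^{1/2}\), via \(AB=A^{1/2}(A^{1/2}BA^{1/2})A^{-1/2}\). Hence \(AB\) has strictly positive spectrum. A Hermitian matrix with positive spectrum is positive definite, so \(AB\in\Pd\).

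None of these steps is a genuine obstacle. The only point needing care is to invoke uniqueness of the polar decomposition for invertible \(X\) each time. In particular, in the adjoint identity the positive factor \(U_0|X|U_0^*\) has to be moved to the right of the unitary before uniqueness applies.
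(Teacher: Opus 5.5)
Your proof is correct and follows the same route as the paper. Each polar identity reduces to uniqueness of the polar decomposition of an invertible matrix, and the last equivalence rests on the fact that \(AB\) is Hermitian exactly when \(AB=BA\); for the converse you also check positivity of the spectrum via the similarity \(AB=A^{1/2}(A^{1/2}BA^{1/2})A^{-1/2}\), a step the paper's one-line argument leaves implicit.
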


\begin{proof}
	The polar identities follow immediately from uniqueness of the polar decomposition for
	invertible matrices. The last equivalence follows from the standard fact that, for
	Hermitian \(A,B\), the product \(AB\) is Hermitian if and only if \(AB=BA\)
	\cite[Ch.~1]{Bhatia2007}.
\end{proof}

\begin{lemma}
	\label{lem:liu-cheng-half}
	Let \(A,B\in\Pd\) and let \(f:(0,\infty)\to[0,\infty)\) be nondecreasing. Then
	\[
	\Tr\bigl[f(A)A^{1/2}B^{1/2}\bigr]
	\le
	\Tr\Bigl[f(A)\bigl(A^{1/2}BA^{1/2}\bigr)^{1/2}\Bigr].
	\]
\end{lemma}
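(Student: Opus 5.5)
The plan is to reduce the inequality to spectral projections of \(A\) by a layer-cake decomposition, and then prove the projection case. First, both sides are real. Since \(f(A)\) commutes with \(A\), the left side equals \(\Tr[(f(A)A^{1/2})B^{1/2}]\), a trace of a product of two positive semidefinite matrices, hence real and \(\ge 0\). The right side is of the same type.

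\emph{Layer-cake reduction.} Write \(A=\sum_{j=1}^d\lambda_j e_je_j^*\) with \(\lambda_1\ge\dots\ge\lambda_d>0\), and let \(E_k:=\sum_{j\le k}e_je_j^*\) be the spectral projection onto the top \(k\) eigenvectors. Since \(f\ge0\) is nondecreasing,
\[
f(A)=\sum_{k=1}^d c_kE_k,
\qquad c_k:=f(\lambda_k)-f(\lambda_{k+1})\ge0\ (k<d),\quad c_d:=f(\lambda_d)\ge0 .
\]
Both sides of the lemma are linear in \(f(A)\). So it suffices to prove, for each spectral projection \(E=E_k\) of \(A\) onto an upper part of its spectrum,
\[
\Tr\bigl[EA^{1/2}B^{1/2}\bigr]\le\Tr\bigl[EM^{1/2}\bigr],
\qquad M:=A^{1/2}BA^{1/2}.
\]

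\emph{Projection case.} Set \(Z:=A^{1/2}B^{1/2}\). Then \(|Z^*|=(ZZ^*)^{1/2}=M^{1/2}\), so \(Z=M^{1/2}V\) with \(V\) unitary. Using \(E=E^2\), the Cauchy--Schwarz inequality for the Hilbert--Schmidt inner product gives
\[
\Tr[EZ]=\Tr\bigl[(EM^{1/4})(M^{1/4}VE)\bigr]\le\bigl(\Tr[EM^{1/2}]\bigr)^{1/2}\bigl(\Tr[E|Z|]\bigr)^{1/2},
\qquad |Z|=(B^{1/2}AB^{1/2})^{1/2}.
\]
It therefore remains to show \(\Tr[E|Z|]\le\Tr[E|Z^*|]\). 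This step must use that \(E\) projects onto the \emph{large} eigenvalues of \(A\); for a lower spectral projection the inequality should reverse.

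To prove it, write \(A=A_1\oplus A_2\) in the decomposition \(\operatorname{ran}E\oplus\operatorname{ran}E^\perp\), with \(A_1\ge tI\ge A_2\) for some \(t>0\). I would first try a continuity/interpolation argument in the exponent. Consider \(\psi(s):=\Tr[E(A^{s/2}BA^{s/2})^{1/2}]\) together with the corresponding quantity with the roles of \(A\) and \(B\) interchanged. Then compare them using the fact that conjugating by \(A^{s}\) pushes mass toward \(\operatorname{ran}E\), because \(A^{s}\ge t^{s}\) there and \(A^{s}\le t^{s}\) on the complement. A more hands-on alternative is to work directly with the \(2\times2\) block form of \(B\). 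Another fallback avoids the second Cauchy--Schwarz factor entirely: bound \(\Tr[EZ]=\Tr[EZE]\le\|EZE\|_1\), then compare \(\|EZE\|_1\) with \(\Tr[EM^{1/2}E]\) using the ordering \(A_1\ge tI\ge A_2\).

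The main obstacle is this projection-case comparison. The layer-cake reduction and the reality of both sides are routine. Note that the naive route via compression and operator concavity of \(\sqrt{\cdot}\) (Hansen--Pedersen) gives \((EME)^{1/2}\ge EM^{1/2}E\), which is the wrong direction. So the spectral ordering of \(A\) across \(E\) has to enter in an essential way. If a self-contained argument resists, I would invoke the Liu--Cheng trace inequality that the lemma's label refers to.
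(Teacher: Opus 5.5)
Your layer-cake reduction and the Cauchy--Schwarz step are both correct. The lemma is equivalent to its projection case, and you validly reduce that case to
\[
\Tr\bigl[E\,(B^{1/2}AB^{1/2})^{1/2}\bigr]\le \Tr\bigl[E\,(A^{1/2}BA^{1/2})^{1/2}\bigr]
\]
for every upper spectral projection $E$ of $A$. The gap is that this comparison carries the entire noncommutative content of the lemma, and the proposal never proves it. The interpolation in the exponent, the block computation, and the $\|EZE\|_1$ fallback are named, but no inequality is derived from any of them. As written, the argument therefore rests on your final sentence. Invoking Liu--Cheng's Theorem~4 at $s=\frac12$ is exactly the paper's whole proof, and with that citation your preceding reduction becomes superfluous.

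Your route can, however, be completed with resolvents. Write $E^\perp:=I-E$. Since $x^{1/2}=\frac1\pi\int_0^\infty\lambda^{-1/2}\bigl(1-\lambda(x+\lambda)^{-1}\bigr)\,d\lambda$, it suffices to show $\Tr[E(Z^*Z+\lambda)^{-1}]\ge\Tr[E(ZZ^*+\lambda)^{-1}]$ for every $\lambda>0$. Put $Y:=\lambda^{1/2}B^{-1/2}$ and $G:=(A+Y^2)^{-1}$. Then $(Z^*Z+\lambda)^{-1}=\lambda^{-1}YGY$ and $(ZZ^*+\lambda)^{-1}=\lambda^{-1}\bigl(I-A^{1/2}GA^{1/2}\bigr)$, so the claim becomes $\Tr[G(YEY+AE)]\ge\Tr E$. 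Choose $t>0$ with $AE\ge tE$ and $AE^\perp\le tE^\perp$, and set $A':=AE+tE^\perp$. Then $A\le A'$ and $tI\le A'$, so $G\ge G':=(A'+Y^2)^{-1}$ and $G'\le G_0:=(tI+Y^2)^{-1}$. We have $YEY+AE=(A'+Y^2)-(YE^\perp Y+tE^\perp)$, and both $YEY+AE$ and $YE^\perp Y+tE^\perp$ are positive semidefinite. Hence
\begin{align*}
\Tr[G(YEY+AE)]&\ge\Tr[G'(YEY+AE)]=d-\Tr\bigl[G'(YE^\perp Y+tE^\perp)\bigr]\\
&\ge d-\Tr\bigl[G_0(YE^\perp Y+tE^\perp)\bigr]=d-\Tr E^\perp=\Tr E.
\end{align*}
The last equality uses $[G_0,Y]=0$, so that $\Tr[G_0(YE^\perp Y+tE^\perp)]=\Tr[E^\perp(Y^2+tI)G_0]=\Tr E^\perp$. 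This is precisely where the spectral ordering across $E$ enters. Together with your Cauchy--Schwarz step it gives $\Tr[EZ]\le\Tr[EM^{1/2}]$, and the layer-cake sum then yields the lemma.
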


\begin{proof}
	This is the case \(s=\frac12\) of \cite[Theorem~4]{LiuCheng2025}.
\end{proof}

\begin{lemma}
	\label{lem:liu-cheng-half-reverse}
	Let \(A,B\in\Pd\) and let \(g:(0,\infty)\to\mathbb R\) be such that the function
	\[
	t\longmapsto t^{1/2}g(t)
	\]
	is nonnegative and nonincreasing on \((0,\infty)\). Then
	\[
	\Tr\Bigl[g(A)\bigl(A^{1/2}BA^{1/2}\bigr)^{1/2}\Bigr]
	\le
	\Tr\bigl[g(A)A^{1/2}B^{1/2}\bigr].
	\]
\end{lemma}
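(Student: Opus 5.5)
The plan is to deduce this inequality from Lemma~\ref{lem:liu-cheng-half} by inverting \(A\) and swapping the roles of the two sides. Set
\[
h(t):=t^{1/2}g(t),\qquad t>0.
\]
By hypothesis \(h\) is nonnegative and nonincreasing on \((0,\infty)\). Now define
\[
f(s):=h(1/s)=s^{-1/2}g(1/s),\qquad s>0.
\]
Since \(s\mapsto 1/s\) is decreasing and \(h\) is nonincreasing, \(f\) is nonnegative and nondecreasing on \((0,\infty)\). So \(f\) is an admissible function in Lemma~\ref{lem:liu-cheng-half}.

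Next, apply Lemma~\ref{lem:liu-cheng-half} with the positive definite matrices
\[
C:=A^{-1},\qquad D:=A^{1/2}BA^{1/2}.
\]
This gives
\[
\Tr\bigl[f(C)C^{1/2}D^{1/2}\bigr]\le \Tr\Bigl[f(C)\bigl(C^{1/2}DC^{1/2}\bigr)^{1/2}\Bigr].
\]

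It remains to identify both sides with the two sides of the claim, using only functional calculus.

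For the right-hand side, \(C^{1/2}DC^{1/2}=A^{-1/2}A^{1/2}BA^{1/2}A^{-1/2}=B\). Also \(f(C)=f(A^{-1})=h(A)=g(A)A^{1/2}\), because \(g(A)\) and \(A^{1/2}\) commute. Hence the right-hand side equals \(\Tr[g(A)A^{1/2}B^{1/2}]\), which is the right-hand side of the claim.

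For the left-hand side, \(f(C)C^{1/2}=h(A)A^{-1/2}=g(A)\). Hence the left-hand side equals \(\Tr\bigl[g(A)\bigl(A^{1/2}BA^{1/2}\bigr)^{1/2}\bigr]\), which is the left-hand side of the claim.

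Substituting these two identifications yields exactly the stated inequality. The only point needing care is the monotonicity transfer from \(h\) to \(f\): composing the nonincreasing \(h\) with the decreasing inversion produces a nondecreasing function, so the hypothesis of Lemma~\ref{lem:liu-cheng-half} is met. Every other step is a routine commuting functional-calculus identity. No genuine obstacle is expected beyond choosing the substitution \(C=A^{-1}\), \(D=A^{1/2}BA^{1/2}\), which is what converts the forward inequality into the reverse one.
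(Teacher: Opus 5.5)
Your argument is correct, and it takes a different route from the paper. The paper does not derive this lemma: it cites it as the case $s=\frac12$ of \cite[Proposition~5]{LiuCheng2025}, separately from Lemma~\ref{lem:liu-cheng-half}, which it cites as \cite[Theorem~4]{LiuCheng2025}.

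You instead deduce the reverse inequality from the forward one. You use the substitution $(A,B)\mapsto(A^{-1},A^{1/2}BA^{1/2})$ together with $f(s)=s^{-1/2}g(1/s)$. All your identifications check out: $C^{1/2}DC^{1/2}=B$, $f(A^{-1})=h(A)=g(A)A^{1/2}$, and $f(C)C^{1/2}=g(A)$. The monotonicity transfer from $h$ to $f$ is also right.

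Your route buys economy and some structure. The substitution is an involution: applying it twice returns $(A,B)$, and $g(t)=t^{-1/2}f(1/t)$ inverts the correspondence $f\leftrightarrow g$. So your computation shows that the two Liu--Cheng inequalities at $s=\frac12$ are equivalent, and the paper only needs to import one of them. As a consequence, the nonincreasing branch of Theorem~\ref{thm:one-sided-polar-monotone} rests on exactly the same external input as the nondecreasing branch. The paper's citation is shorter, but it hides this relation and relies on two external results where one suffices.
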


\begin{proof}
	This is the case \(s=\frac12\) of \cite[Proposition~5]{LiuCheng2025}.
\end{proof}

\section{Monotonicity of the $x$-polar family}
\label{sec:monotonicity}
The main result of this section is the following two-range monotonicity theorem for the
one-sided polar paths.
\begin{theorem}
	\label{thm:one-sided-polar-monotone}
	Let \(P,Q\in\Pd\). Then the maps
	\[
	x\longmapsto \Phi_P(x)=F_{P^x}(P,Q),
	\qquad
	x\longmapsto \Phi_Q(x)=F_{Q^x}(P,Q)
	\]
	are nondecreasing on \((-\infty,1]\) and nonincreasing on
	\([1,\infty)\). Equivalently, for every \(y\le x\le 1\),
	\[
	\Phi_P(y)\le \Phi_P(x),
	\qquad
	\Phi_Q(y)\le \Phi_Q(x),
	\]
	while for every \(1\le y\le x\),
	\[
	\Phi_P(y)\ge \Phi_P(x),
	\qquad
	\Phi_Q(y)\ge \Phi_Q(x).
	\]
	In particular, each one-sided polar fidelity is nondecreasing up to \(x=1\),
	nonincreasing after \(x=1\), and attains a global maximum at \(x=1\), where
	\[
	\Phi_P(1)=\Phi_Q(1)=\FU(P,Q).
	\]
\end{theorem}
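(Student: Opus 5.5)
The plan is to derive both monotonicity ranges from a single reparametrization combined with the reverse Liu--Cheng inequality, Lemma~\ref{lem:liu-cheng-half-reverse}. It suffices to treat \(\Phi_P\). The formula for \(\Phi_Q\) is obtained from that of \(\Phi_P\) by interchanging \(P\) and \(Q\), so the same argument applied to the pair \((Q,P)\) gives the claims for \(\Phi_Q\). The case \(x=y\) is trivial, so we assume \(y<x\) throughout. The value at \(x=1\) comes directly from the preliminaries: \(\Phi_P(1)=F_P(P,Q)=\FU(P,Q)=F_Q(P,Q)=\Phi_Q(1)\). Once both ranges are established, the global-maximum statement follows.

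\emph{First range, \(y<x\le 1\).} Set \(B:=P^{x/2}QP^{x/2}\in\Pd\) and \(A:=P^{y-x}\in\Pd\). Then
\[
P^{y/2}QP^{y/2}=A^{1/2}BA^{1/2}
\quad\text{and}\quad
P=A^{1/(y-x)},
\]
so
\[
\Phi_P(y)=\Tr\bigl[g(A)(A^{1/2}BA^{1/2})^{1/2}\bigr],
\qquad
g(t):=t^{\frac{1-y}{2(y-x)}}.
\]
Then
\[
t^{1/2}g(t)=t^{\frac{1-x}{2(y-x)}}.
\]
This is positive, and its exponent is \(\le 0\) because \(1-x\ge0\) and \(y-x<0\). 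Hence \(t^{1/2}g(t)\) is nonincreasing, and Lemma~\ref{lem:liu-cheng-half-reverse} applies to give
\[
\Phi_P(y)\le \Tr\bigl[g(A)A^{1/2}B^{1/2}\bigr].
\]
Since \(g(A)A^{1/2}=P^{\frac{1-y}{2}}P^{\frac{y-x}{2}}=P^{\frac{1-x}{2}}\), the right-hand side equals \(\Tr\bigl[P^{\frac{1-x}{2}}B^{1/2}\bigr]=\Phi_P(x)\).

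\emph{Second range, \(1\le y<x\).} Here we reverse the roles of the two parameters. Set \(B':=P^{y/2}QP^{y/2}\) and \(A:=P^{x-y}\). Then
\[
\Phi_P(x)=\Tr\bigl[g(A)(A^{1/2}B'A^{1/2})^{1/2}\bigr],
\qquad
g(t):=t^{\frac{1-x}{2(x-y)}},
\]
and
\[
t^{1/2}g(t)=t^{\frac{1-y}{2(x-y)}}.
\]
This exponent is \(\le0\) because \(y\ge1\) and \(x-y>0\), so \(t^{1/2}g(t)\) is positive and nonincreasing. Lemma~\ref{lem:liu-cheng-half-reverse} then gives
\[
\Phi_P(x)\le\Tr\bigl[P^{\frac{1-x}{2}}P^{\frac{x-y}{2}}B'^{1/2}\bigr]=\Phi_P(y).
\]

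The only delicate point is choosing, in each range, which endpoint plays the role of \(B\). The choice has to make the exponent of \(t^{1/2}g(t)\) nonpositive, and this sign condition is exactly what the hypothesis \(x\le1\), respectively \(y\ge1\), provides. Beyond that, one should check that the functional-calculus identities \(A^{1/2}=P^{(y-x)/2}\) and \(g(A)=P^{(1-y)/2}\) hold; they do because all the matrices involved are functions of the single matrix \(P\).
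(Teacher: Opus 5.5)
Your proof is correct. In the range $1\le y<x$ it is the paper's argument verbatim: the same $A=P^{x-y}$, $B=P^{y/2}QP^{y/2}$, $g(t)=t^{(1-x)/(2(x-y))}$, and Lemma~\ref{lem:liu-cheng-half-reverse}.

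The difference is in the range $y<x\le 1$. The paper keeps the same $A$ and $B$. Since there $\beta=\frac{1-x}{2(x-y)}\ge 0$, it applies the forward inequality, Lemma~\ref{lem:liu-cheng-half}, with the nondecreasing weight $f(t)=t^{\beta}$. It then reads $\Phi_P(y)$ off the unsandwiched side and $\Phi_P(x)$ off the sandwiched side.

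You instead invert the base, taking $A=P^{y-x}$, and place the $x$-endpoint inside $B=P^{x/2}QP^{x/2}$. This is exactly the paper's data transported by $A\mapsto A^{-1}$, $B\mapsto A^{1/2}BA^{1/2}$. Under that substitution the weight $f$ becomes $g(s)=s^{-1/2}f(1/s)=s^{(1-y)/(2(y-x))}$, which is precisely your $g$. The same substitution turns Lemma~\ref{lem:liu-cheng-half} into Lemma~\ref{lem:liu-cheng-half-reverse}, so the two cited trace inequalities are equivalent to each other.

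Your route lets the whole theorem rest on a single external inequality. The paper's route uses one fixed parametrization for both ranges, and the sign of $\beta$ alone decides which lemma applies.
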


\begin{proof}
	We prove the claim for \(\Phi_P\); the proof for \(\Phi_Q\) is similar after swapping
	\(P\) and \(Q\).
	
	Fix \(y<x\), and set
	\begin{equation}
		\Delta:=x-y>0,
		\qquad
		A:=P^\Delta,
		\qquad
		B:=P^{y/2}QP^{y/2},
		\qquad
		\beta:=\frac{1-x}{2\Delta}.
		\label{eq:one-sided-monotonicity-setup}
	\end{equation}
	Then
	\begin{align}
		\Tr\bigl[A^\beta A^{1/2}B^{1/2}\bigr]
		&=
		\Tr\!\left[P^{\beta\Delta}P^{\Delta/2}\bigl(P^{y/2}QP^{y/2}\bigr)^{1/2}\right]  \notag\\
		&=
		\Tr\!\left[P^{\frac{1-x}{2}}P^{\frac{x-y}{2}}\bigl(P^{y/2}QP^{y/2}\bigr)^{1/2}\right] \notag \\
		&=
		\Tr\!\left[P^{\frac{1-y}{2}}\bigl(P^{y/2}QP^{y/2}\bigr)^{1/2}\right] \notag\\
		&=\Phi_P(y),
		\label{eq:one-sided-monotonicity-left}
	\end{align}
	while
	\begin{align}
		\Tr\Bigl[A^\beta\bigl(A^{1/2}BA^{1/2}\bigr)^{1/2}\Bigr]
		&=
		\Tr\!\left[P^{\beta\Delta}
		\bigl(P^{\Delta/2}P^{y/2}QP^{y/2}P^{\Delta/2}\bigr)^{1/2}\right]  \notag \\
		&=
		\Tr\!\left[P^{\frac{1-x}{2}}\bigl(P^{x/2}QP^{x/2}\bigr)^{1/2}\right] \notag \\
		&=\Phi_P(x).
		\label{eq:one-sided-monotonicity-right}
	\end{align}
	
	It remains to compare these two expressions in the two monotonicity ranges stated in the theorem.
	\smallskip
	\noindent
	First assume that \(y<x\le 1\). Then \(\beta\ge 0\), so the function
	\(f(t):=t^\beta\) is nonnegative and nondecreasing on \((0,\infty)\). By Lemma~\ref{lem:liu-cheng-half},
	\[
	\Tr\bigl[A^\beta A^{1/2}B^{1/2}\bigr]
	\le
	\Tr\Bigl[A^\beta\bigl(A^{1/2}BA^{1/2}\bigr)^{1/2}\Bigr].
	\]
	Using \eqref{eq:one-sided-monotonicity-left}--\eqref{eq:one-sided-monotonicity-right}, we obtain
	\[
	\Phi_P(y)\le \Phi_P(x).
	\]
	
	\smallskip
	\noindent
	Next assume that \(1\le y<x\). Then \(\beta\le 0\). Set \(g(t):=t^\beta\).  Then
	\[
	t^{1/2}g(t)=t^{\beta+1/2}=t^{\frac{1-y}{2\Delta}},
	\]
	which is nonnegative and nonincreasing on \((0,\infty)\) because \(y\ge 1\). Therefore
	Lemma~\ref{lem:liu-cheng-half-reverse} gives
	\[
	\Tr\Bigl[A^\beta\bigl(A^{1/2}BA^{1/2}\bigr)^{1/2}\Bigr]
	\le
	\Tr\bigl[A^\beta A^{1/2}B^{1/2}\bigr].
	\]
	Again using \eqref{eq:one-sided-monotonicity-left}--\eqref{eq:one-sided-monotonicity-right}, we
	conclude that
	\[
	\Phi_P(x)\le \Phi_P(y).
	\]
	
	The case \(x=y\) is trivial. This proves the theorem.
\end{proof}

\begin{corollary}
	\label{cor:x-polar-monotone}
	Let \(P,Q\in\Pd\). Then the map
	\[
	x\longmapsto \Fpol{x}(P,Q)
	\]
	is nondecreasing on \((-\infty,1]\) and nonincreasing on
	\([1,\infty)\). In particular, \(\Fpol{x}(P,Q)\) is nondecreasing up to \(x=1\),
	nonincreasing after \(x=1\), attains a global maximum at \(x=1\), and
	\[
	\Fpol{x}(P,Q)\le \Fpol{1}(P,Q)=\FU(P,Q)
	\qquad (x\in\mathbb R).
	\]
	Its restriction to \([-1,1]\) is therefore nondecreasing.
	
	Moreover, if \([P,Q]=0\), then for every \(x\in\mathbb R\),
	\[
	\Phi_P(x)=\Phi_Q(x)=\Fpol{x}(P,Q)=\Tr(P^{1/2}Q^{1/2}).
	\]
	Hence the maximizer at \(x=1\) need not be unique; the monotonicity assertions above
	are weak in general.
\end{corollary}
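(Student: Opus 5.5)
The first part of Corollary~\ref{cor:x-polar-monotone} follows from Theorem~\ref{thm:one-sided-polar-monotone}. For \(y\le x\le 1\), the theorem gives \(\Phi_P(y)\le\Phi_P(x)\) and \(\Phi_Q(y)\le\Phi_Q(x)\). Adding these and dividing by two shows that \(\Fpol{y}(P,Q)\le\Fpol{x}(P,Q)\). The same averaging of the reversed inequalities gives the nonincreasing branch on \([1,\infty)\).

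At \(x=1\), the definitions \(\Phi_P(1)=F_P(P,Q)\) and \(\Phi_Q(1)=F_Q(P,Q)\), together with the identities \(\FU(P,Q)=F_P(P,Q)=F_Q(P,Q)\) from Section~\ref{sec:prelim}, give \(\Fpol{1}(P,Q)=\FU(P,Q)\). The two-range monotonicity then makes \(x=1\) a global maximizer. Indeed, for \(x<1\) we use \(\Fpol{x}\le\Fpol{1}\) from the first branch, and for \(x>1\) we use the second branch with \(y=1\). The restriction to \([-1,1]\) is nondecreasing because \([-1,1]\subset(-\infty,1]\).

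For the commuting case, assume \([P,Q]=0\). Then \(P\) and \(Q\) are simultaneously unitarily diagonalizable, so all real powers of \(P\) commute with \(Q\) and with each other. Hence \(P^{x/2}QP^{x/2}=P^xQ\) is positive definite, and its square root is \(P^{x/2}Q^{1/2}\), by uniqueness of the positive square root of a product of commuting positive definite matrices. Therefore
\[
\Phi_P(x)=\Tr\bigl[P^{\frac{1-x}{2}}P^{\frac x2}Q^{\frac12}\bigr]=\Tr\bigl(P^{1/2}Q^{1/2}\bigr).
\]
The same computation with the roles of \(P\) and \(Q\) swapped gives \(\Phi_Q(x)=\Tr(Q^{1/2}P^{1/2})=\Tr(P^{1/2}Q^{1/2})\), and averaging yields the value of \(\Fpol{x}(P,Q)\).

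So in the commuting case the map \(x\mapsto\Fpol{x}(P,Q)\) is constant, and every real \(x\) is a maximizer. This shows that the maximizer at \(x=1\) need not be unique, and that the monotonicity cannot be strengthened to strict monotonicity in general. No step is a real obstacle. The only point requiring care is justifying \((P^{x/2}QP^{x/2})^{1/2}=P^{x/2}Q^{1/2}\), and that follows directly from simultaneous diagonalization.
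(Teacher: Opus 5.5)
Your proposal is correct and follows the paper's own proof: you average the one-sided inequalities of Theorem~\ref{thm:one-sided-polar-monotone}, identify \(\Fpol{1}(P,Q)=\FU(P,Q)\) from \(\Phi_P(1)=\Phi_Q(1)=\FU(P,Q)\), and in the commuting case reduce \((P^{x/2}QP^{x/2})^{1/2}\) to \(P^{x/2}Q^{1/2}\) to get the constant value \(\Tr(P^{1/2}Q^{1/2})\). Your explicit appeal to uniqueness of the positive square root is a bit more careful than the paper's one-line computation, but the argument is the same.
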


\begin{proof}
	The two-range monotonicity follows immediately by averaging the corresponding
	inequalities for \(\Phi_P\) and \(\Phi_Q\) in
	Theorem~\ref{thm:one-sided-polar-monotone}. Since
	\(\Phi_P(1)=\Phi_Q(1)=\FU(P,Q)\), we have
	\(\Fpol{1}(P,Q)=\FU(P,Q)\), and the asserted maximum and upper bound follow.
	
	If \([P,Q]=0\), then
	\[
	\Phi_P(x)
	=
	\Tr\!\left[
	P^{\frac{1-x}{2}}
	\left(P^{\frac{x}{2}}QP^{\frac{x}{2}}\right)^{1/2}
	\right]
	=
	\Tr\!\left[
	P^{\frac{1-x}{2}}(P^xQ)^{1/2}
	\right]
	=
	\Tr(P^{1/2}Q^{1/2}).
	\]
	The same computation with \(P\) and \(Q\) interchanged gives
	\[
	\Phi_Q(x)=\Tr(P^{1/2}Q^{1/2}).
	\]
	Thus
	\[
	\Phi_P(x)=\Phi_Q(x)=\Fpol{x}(P,Q)=\Tr(P^{1/2}Q^{1/2})
	\]
	for every \(x\in\mathbb R\). This proves the final assertion.
\end{proof}

\begin{corollary}
	\label{cor:named-fidelities-bridge}
	Let \(P,Q\in\Pd\). For every \(x\in[-1,1]\),
	\[
	\FM(P,Q)=\Fpol{-1}(P,Q)
	\le \Fpol{x}(P,Q)
	\le \Fpol{1}(P,Q)=\FU(P,Q),
	\]
	and
	\[
	\Fpol{0}(P,Q)=\FH(P,Q).
	\]
	Hence
	\[
	\FM(P,Q)\le \FH(P,Q)\le \FU(P,Q)
	\]
	is refined by the monotone bridge \(x\mapsto \Fpol{x}(P,Q)\).
\end{corollary}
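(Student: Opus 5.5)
This is essentially immediate from Corollary~\ref{cor:x-polar-monotone} together with the identification of the endpoint and midpoint values of the polar family. The plan is to evaluate $\Fpol{x}(P,Q)$ at $x=-1,0,1$, and then let monotonicity on $[-1,1]$ supply the chain of inequalities.

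\emph{Step 1: endpoint and midpoint values.} Use the definitions $\Phi_P(x)=F_{P^x}(P,Q)$ and $\Phi_Q(x)=F_{Q^x}(P,Q)$ together with the reductions recalled in Section~\ref{sec:prelim}:
\begin{itemize}
\item at $x=-1$ the bases are $P^{-1}$ and $Q^{-1}$, so $\Phi_P(-1)=\Phi_Q(-1)=\FM(P,Q)$;
\item at $x=0$ both bases equal $I$, so $\Phi_P(0)=\Phi_Q(0)=\FH(P,Q)$;
\item at $x=1$ the bases are $P$ and $Q$, so $\Phi_P(1)=\Phi_Q(1)=\FU(P,Q)$.
\end{itemize}
Averaging gives $\Fpol{-1}(P,Q)=\FM(P,Q)$, $\Fpol{0}(P,Q)=\FH(P,Q)$ and $\Fpol{1}(P,Q)=\FU(P,Q)$.

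For the Matsumoto value, I would also check it directly from the explicit formula for $\Phi_P$. Setting $x=-1$ gives
\[
\Phi_P(-1)=\Tr\bigl[P\,(P^{-1/2}QP^{-1/2})^{1/2}\bigr],
\]
which is the standard geometric-mean expression $\Tr(P\#Q)$ for Matsumoto fidelity. This confirms that the identification $\FM=F_{P^{-1}}$ is used consistently.

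\emph{Step 2: monotonicity.} By Corollary~\ref{cor:x-polar-monotone}, the map $x\mapsto\Fpol{x}(P,Q)$ is nondecreasing on $(-\infty,1]$, hence on $[-1,1]$. For any $x\in[-1,1]$ this gives $\Fpol{-1}\le\Fpol{x}\le\Fpol{1}$. Substituting the values from Step~1 yields the displayed chain with endpoints $\FM$ and $\FU$.

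\emph{Step 3: refinement.} Taking $x=0$ in the chain gives $\FM\le\FH\le\FU$. This inequality is therefore refined by the monotone bridge $x\mapsto\Fpol{x}(P,Q)$ on $[-1,1]$, which passes through $\FM$, $\FH$ and $\FU$ at $x=-1,0,1$.

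There is no real obstacle here. The only point needing care is the bookkeeping in Step~1, namely that both one-sided paths hit the same named fidelity at each of $x=-1,0,1$, so that the averages are exactly the named fidelities rather than mixtures of them.
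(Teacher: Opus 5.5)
Your proposal is correct and follows the paper's proof. It uses the same two ingredients: the monotonicity of $x\mapsto\Fpol{x}(P,Q)$ on $[-1,1]$ from Corollary~\ref{cor:x-polar-monotone}, and the identifications of $\Fpol{-1}$, $\Fpol{0}$, $\Fpol{1}$ with $\FM$, $\FH$, $\FU$ recalled in Section~\ref{sec:prelim}.
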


\begin{proof}
	By Corollary~\ref{cor:x-polar-monotone}, the map
	\[
	x\longmapsto \Fpol{x}(P,Q)
	\]
	is nondecreasing on \([-1,1]\). Using
	\[
	\Fpol{-1}(P,Q)=\FM(P,Q),\qquad
	\Fpol{0}(P,Q)=\FH(P,Q),\qquad
	\Fpol{1}(P,Q)=\FU(P,Q),
	\]
	the stated inequalities follow immediately.
\end{proof}

\begin{remark}
	Theorem~\ref{thm:one-sided-polar-monotone} answers Open Problem~4 of
	\cite{AfhamFerrie2024} in the affirmative and proves a stronger global statement.
	Indeed, for every fixed \(P,Q\in\Pd\), the one-sided polar paths
	\(x\mapsto F_{P^x}(P,Q)\) and \(x\mapsto F_{Q^x}(P,Q)\) are nondecreasing on
	\((-\infty,1]\) and nonincreasing on \([1,\infty)\). Averaging the two one-sided
	inequalities gives the same two-range monotonicity for the symmetrized family
	\(x\mapsto \Fpol{x}(P,Q)\). In particular, on the interval \([-1,1]\), these polar
	families provide a monotone interpolation from Matsumoto fidelity at \(x=-1\) to
	Uhlmann fidelity at \(x=1\), passing through Holevo fidelity at \(x=0\).
\end{remark}

\section{Exact interval realization, pointwise recovery, and partial data processing}
\label{sec:dpi}

\subsection{Exact interval realization and pointwise recovery}

The next result records the exact realization of the extremal interval
\([\FM(P,Q),\FU(P,Q)]\) on \([-1,1]\), together with the corresponding pointwise
recovery statement along the one-sided and symmetrized polar paths.

\begin{proposition}
	\label{prop:exact-interval-realization}
	Let \(P,Q\in\Pd\). For \(x\in[-1,1]\), recall that
	\[
	\Fpol{x}(P,Q):=\frac{F_{P^x}(P,Q)+F_{Q^x}(P,Q)}{2}.
	\]
	Then the following statements hold.
	\begin{enumerate}
		\item[(i)] The restrictions to \([-1,1]\) realize exactly the extremal interval:
		\[
		\begin{aligned}
			\{F_{P^x}(P,Q):x\in[-1,1]\}
			&=
			\{F_{Q^x}(P,Q):x\in[-1,1]\} \\
			&=
			\{\Fpol{x}(P,Q):x\in[-1,1]\} \\
			&=
			[\FM(P,Q),\FU(P,Q)].
		\end{aligned}
		\]
		In particular, for every \(x\in[-1,1]\),
		\[
		\FM(P,Q)\le \Fpol{x}(P,Q)\le \FU(P,Q).
		\]
		
		\item[(ii)] Let \(\mathcal F:\Pd\times\Pd\to\mathbb R\) be any function such that
		\[
		\FM(P,Q)\le \mathcal F(P,Q)\le \FU(P,Q)
		\qquad (P,Q\in\Pd).
		\]
		Then for every fixed pair \(P,Q\in\Pd\), there exist generally nonunique parameters
		\[
		\theta_P,\theta_Q,\vartheta\in[-1,1]
		\]
		(depending on \(\mathcal F,P,Q\)) such that
		\[
		\mathcal F(P,Q)
		=
		F_{P^{\theta_P}}(P,Q)
		=
		F_{Q^{\theta_Q}}(P,Q)
		=
		\Fpol{\vartheta}(P,Q).
		\]
	\end{enumerate}
\end{proposition}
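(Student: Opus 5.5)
The plan is to combine the monotonicity results of Section~\ref{sec:monotonicity} with continuity in \(x\) and the intermediate value theorem.

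\emph{Endpoint values.} First I record the endpoint values of the three maps on \([-1,1]\), using the reductions recalled in Section~\ref{sec:prelim}. One has \(\Phi_P(-1)=F_{P^{-1}}(P,Q)=\FM(P,Q)\) and \(\Phi_P(1)=F_P(P,Q)=\FU(P,Q)\). Likewise \(\Phi_Q(-1)=\FM(P,Q)\) and \(\Phi_Q(1)=\FU(P,Q)\). Averaging gives \(\Fpol{-1}(P,Q)=\FM(P,Q)\) and \(\Fpol{1}(P,Q)=\FU(P,Q)\).

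\emph{Image contained in the interval.} By Theorem~\ref{thm:one-sided-polar-monotone} and Corollary~\ref{cor:x-polar-monotone}, each of \(\Phi_P\), \(\Phi_Q\) and \(x\mapsto\Fpol{x}(P,Q)\) is nondecreasing on \([-1,1]\). Hence each image over \([-1,1]\) lies in \([\FM(P,Q),\FU(P,Q)]\), which gives the displayed two-sided bound in (i).

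\emph{Continuity and the reverse inclusion.} For the reverse inclusion I need continuity of \(\Phi_P\) on \([-1,1]\). The map \(x\mapsto P^x=\exp(x\log P)\) is continuous, so \(x\mapsto P^{x/2}QP^{x/2}\) is a continuous \(\Pd\)-valued map. The square root is continuous on \(\Pd\), for instance via the holomorphic functional calculus, or because \(X\mapsto X^{1/2}\) is operator monotone and norm-continuous. Multiplying by the continuous factor \(P^{(1-x)/2}\) and taking the trace preserves continuity. The same argument handles \(\Phi_Q\), and \(\Fpol{x}\) is their average. Each of these is therefore a continuous real function on the connected interval \([-1,1]\) taking the values \(\FM\) and \(\FU\) at the endpoints. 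By the intermediate value theorem its image contains \([\FM(P,Q),\FU(P,Q)]\), which proves equality in (i). Monotonicity is not actually needed for surjectivity; it is used only for the containment.

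\emph{Part (ii).} Given \(\mathcal F\) with \(\FM\le\mathcal F\le\FU\) pointwise, the value \(\mathcal F(P,Q)\) lies in the interval. By (i) it is attained by each of the three maps, and I choose any preimages \(\theta_P\), \(\theta_Q\), \(\vartheta\). Nonuniqueness is illustrated by the commuting case in Corollary~\ref{cor:x-polar-monotone}, where all three maps are constant.

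The only step requiring care is the continuity of the functional-calculus expressions. This is routine, but it should be stated explicitly, since it is what makes the intermediate value argument work.
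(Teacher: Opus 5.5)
Your proposal is correct and follows essentially the same route as the paper: monotonicity from Theorem~\ref{thm:one-sided-polar-monotone} and Corollary~\ref{cor:x-polar-monotone} gives containment in \([\FM(P,Q),\FU(P,Q)]\), continuity in \(x\) together with the endpoint identifications gives surjectivity, and (ii) is read off from the range statement in (i). Your justification of continuity is somewhat more explicit than the paper's one-line appeal to functional calculus, but the argument is the same.
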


\begin{proof}
	By Theorem~\ref{thm:one-sided-polar-monotone}, the maps
	\[
	x\mapsto F_{P^x}(P,Q),
	\qquad
	x\mapsto F_{Q^x}(P,Q)
	\]
	are nondecreasing on \([-1,1]\). They are continuous there by functional calculus and
	continuity of the square-root map. Since their endpoint values are
	\[
	F_{P^{-1}}(P,Q)=F_{Q^{-1}}(P,Q)=\FM(P,Q),
	\qquad
	F_P(P,Q)=F_Q(P,Q)=\FU(P,Q),
	\]
	each of their ranges on \([-1,1]\) is exactly
	\([\FM(P,Q),\FU(P,Q)]\).
	
	The same argument applies to \(x\mapsto\Fpol{x}(P,Q)\), using
	Corollary~\ref{cor:x-polar-monotone} and the endpoint values
	\[
	\Fpol{-1}(P,Q)=\FM(P,Q),
	\qquad
	\Fpol{1}(P,Q)=\FU(P,Q).
	\]
	This proves (i). For (ii), fix \(P,Q\in\Pd\). By assumption,
	\[
	\mathcal F(P,Q)\in[\FM(P,Q),\FU(P,Q)].
	\]
	The required parameters then follow from the range statement in (i).
\end{proof}

\begin{remark}
	\label{rem:polar-realization-tool}
	Proposition~\ref{prop:exact-interval-realization}(ii) is a preparatory realization
	result. It shows that any function whose values lie pointwise between \(\FM\) and
	\(\FU\) can already be realized, for each fixed pair \(P,Q\in\Pd\), as a genuine
	generalized-fidelity value along either one-sided polar path \(R=P^x\) or \(R=Q^x\).
	The realizing parameter is generally pair-dependent and generally nonunique; in
	particular, if \([P,Q]=0\), then all named fidelities coincide and every parameter in
	\([-1,1]\) gives the same value.
	
	The connection with \(z\)-fidelities and the Log-Euclidean fidelity is made in
	Theorem~\ref{thm:recover-z-from-generalized}, where this realization mechanism is
	applied after establishing the bounds
	\[
	\FM(P,Q)\le F_z(P,Q)\le \FU(P,Q)
	\quad (z\ge 1/2),
	\qquad
	\FM(P,Q)\le \FLE(P,Q)\le \FU(P,Q).
	\]
\end{remark}

\noindent
For the next result, we use the term \emph{interior fidelity} in the sense of
Afham--Ferrie: a finite convex combination of generalized fidelities,
\[
\sum_{i=1}^n \mu_i F_{R_i}(P,Q),
\]
where \(\mu=(\mu_1,\dots,\mu_n)\) is a probability vector and \(R_1,\dots,R_n\in\Pd\).

\begin{theorem}
	\label{thm:recover-z-from-generalized}
	The following hold.
	\begin{enumerate}
		\item[(i)] For every \(P,Q\in\Pd\) and every \(z\in[1/2,\infty)\), there exist
		generally nonunique parameters
		\[
		\theta_P,\theta_Q\in[-1,1]
		\qquad\text{(depending on \(z,P,Q\))}
		\]
		such that
		\[
		F_z(P,Q)
		=
		F_{P^{\theta_P}}(P,Q)
		=
		F_{Q^{\theta_Q}}(P,Q).
		\]
		In particular, for every \(z\ge 1/2\), every \(z\)-fidelity value admits a pointwise
		representation as a generalized-fidelity value with pair-dependent base.
		
		\item[(ii)] For every \(P,Q\in\Pd\), there exist generally nonunique parameters
		\[
		\theta_P,\theta_Q\in[-1,1]
		\qquad\text{(depending on \(P,Q\))}
		\]
		such that
		\[
		\FLE(P,Q)
		=
		F_{P^{\theta_P}}(P,Q)
		=
		F_{Q^{\theta_Q}}(P,Q).
		\]
		In particular, every Log-Euclidean fidelity value also admits a pointwise
		representation as a generalized-fidelity value with pair-dependent base.
		
		\item[(iii)] Assume \(d\ge 2\). For every \(z\in(0,1/2)\), there exist faithful states
		\(\rho,\sigma\in\Pd\) such that \(F_z(\rho,\sigma)\) cannot be represented either as a
		generalized fidelity \(F_R(\rho,\sigma)\) for any \(R\in\Pd\), or even as an interior fidelity
		\[
		\sum_{i=1}^n \mu_i F_{R_i}(\rho,\sigma).
		\]
		In particular, even allowing pair-dependent bases, \(F_z\) is in general neither a
		generalized fidelity nor an interior fidelity on \(\Pd\).
	\end{enumerate}
\end{theorem}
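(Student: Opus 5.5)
Parts (i) and (ii) will follow from Proposition~\ref{prop:exact-interval-realization}(ii) once we establish the sandwich bounds
\[
\FM(P,Q)\le F_z(P,Q)\le \FU(P,Q)\quad(z\ge 1/2),
\qquad
\FM(P,Q)\le \FLE(P,Q)\le \FU(P,Q).
\]
By Remark~\ref{rem:reduction-to-faithful-states} it suffices to prove these for faithful states $\rho,\sigma$, since both sides scale by $\sqrt{pq}$.

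For the sandwich bounds I will invoke Lemma~\ref{lem:extremal-monotone-fidelity-bounds}, taking $G=F_z$ with $z\ge 1/2$, and $G=\FLE$. Positivity is Lemma~\ref{lem:z-positive-representation}. Symmetry is immediate from the two positive formulas, and normalization and the classical limit on commuting states are direct. The substantive input is data-processing monotonicity. For $F_z$ at $\alpha=1/2$ this is the known range of the $\alpha$-$z$ Rényi divergences: by Zhang's resolution of the Audenaert--Datta conjecture, $D_{\alpha,z}$ satisfies the DPI when $0<\alpha<1$ and $z\ge\max(\alpha,1-\alpha)$, which at $\alpha=1/2$ is exactly $z\ge1/2$. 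I will cite this. For $\FLE$, monotonicity under channels follows by taking the limit $z\to\infty$ pointwise, since monotonicity is preserved under pointwise limits. Alternatively, the upper bound $\FLE\le F_z$ can be obtained from Araki--Lieb--Thirring/Golden--Thompson. With both bounds in hand, parts (i) and (ii) follow immediately.

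Part (iii) is the main obstacle. The strategy is to exhibit a pair with $F_z(\rho,\sigma)>\FU(\rho,\sigma)$ when $z<1/2$. By Lemma~\ref{lem:uhlmann-variational}, $|F_R|\le \FU$ for every base $R$, so any interior fidelity has modulus at most $\FU$ and cannot equal $F_z$. To build the pair, embed a $2\times2$ block and pad with small diagonal entries so the states stay faithful for $d>2$.

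Rather than work with faithful states directly, I will first use pure states, where the computation is explicit. Take $P=|\psi\rangle\langle\psi|$ and $Q=|\phi\rangle\langle\phi|$ with overlap $c=|\langle\psi|\phi\rangle|\in(0,1)$. Then $\FU=c$, while the positive formula of Lemma~\ref{lem:z-positive-representation} gives
\[
F_z=\bigl(c^2\bigr)^z=c^{2z}.
\]
Since $2z<1$ and $c<1$, we have $c^{2z}>c$, a strict inequality. Next, perturb to faithful states $\rho_\epsilon=(1-\epsilon)P+\epsilon I/d$ and $\sigma_\epsilon$ defined analogously. Continuity of $F_z$ is given by Lemma~\ref{lem:z-positive-representation}, and continuity of $\FU$ on positive semidefinite matrices is standard. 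So for small $\epsilon$ the strict inequality $F_z>\FU$ persists, and this completes (iii). The one point needing care is precisely citing continuity of $\FU$ on PSD matrices, for example via $\FU=\|\rho^{1/2}\sigma^{1/2}\|_1$.
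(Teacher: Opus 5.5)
Your proposal is correct and follows the paper's proof essentially step for step. For (i) you use the sandwich $\FM\le F_z\le\FU$ obtained from Lemma~\ref{lem:extremal-monotone-fidelity-bounds} with Zhang's DPI at $\alpha=1/2$, followed by Proposition~\ref{prop:exact-interval-realization}(ii); for (iii) you use the rank-one pair with $c^{2z}>c$, regularized to faithful states, combined with $|F_R|\le\FU$. The only minor deviation is in (ii): you pass monotonicity to the limit $z\to\infty$ and then reapply the extremal lemma to $\FLE$, whereas the paper more directly passes the sandwich bounds themselves to the limit; both routes rest on $\FLE=\lim_{z\to\infty}F_z$.
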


\begin{proof}
	We begin with (i). Let \(z\in[1/2,\infty)\) and fix \(P,Q\in\Pd\). Write
	\[
	p:=\Tr P,\qquad q:=\Tr Q,\qquad \rho:=P/p,\qquad \sigma:=Q/q.
	\]
	By Lemma~\ref{lem:joint-homogeneity},
	\[
	F_z(P,Q)=\sqrt{pq}\,F_z(\rho,\sigma),\qquad
	\FM(P,Q)=\sqrt{pq}\,\FM(\rho,\sigma),\qquad
	\FU(P,Q)=\sqrt{pq}\,\FU(\rho,\sigma).
	\]
	If \([\rho,\sigma]=0\), then
	\[
	F_z(\rho,\sigma)
	=
	\Tr\!\left[\left(\rho^{\frac1{2z}}\sigma^{\frac1{2z}}\right)^z\right]
	=
	\Tr(\rho^{1/2}\sigma^{1/2}),
	\]
	so \(F_z\) agrees with the classical fidelity on commuting faithful states.
	By Lemma~\ref{lem:z-positive-representation}, \(F_z\) is symmetric and positive, and
	\[
	F_z(\rho,\rho)=\Tr\rho=1.
	\]
	Moreover,
	\[
	F_z(\rho,\sigma)
	=
	\Tr\!\left[\left(\sigma^{\frac1{4z}}\rho^{\frac1{2z}}
	\sigma^{\frac1{4z}}\right)^z\right]
	\]
	is the \(\alpha=\frac12\) instance of the standard \(Q_{\alpha,z}\)-quantity associated
	with the \(\alpha\)-\(z\) Rényi divergence. Since
	\[
	D_{\alpha,z}(\rho\|\sigma)
	=
	\frac{1}{\alpha-1}\log Q_{\alpha,z}(\rho\|\sigma),
	\]
	the data-processing inequality for \(D_{\alpha,z}\) at \(\alpha=\frac12\) is equivalent
	to monotone increase of \(Q_{\alpha,z}\), hence of \(F_z\), under quantum channels.
	Therefore \(F_z\) is monotone for \(z\ge\frac12\) by \cite[Theorem~1.2]{Zhang2020}.
	
	Thus \(F_z\) satisfies the hypotheses of
	Lemma~\ref{lem:extremal-monotone-fidelity-bounds}, and so
	\[
	\FM(\rho,\sigma)\le F_z(\rho,\sigma)\le \FU(\rho,\sigma).
	\]
	Multiplying by \(\sqrt{pq}\) yields
	\[
	\FM(P,Q)\le F_z(P,Q)\le \FU(P,Q).
	\]
	Now apply Proposition~\ref{prop:exact-interval-realization}(ii) with
	\(\mathcal F=F_z\). This gives generally nonunique parameters
	\(\theta_P,\theta_Q\in[-1,1]\) such that
	\[
	F_z(P,Q)=F_{P^{\theta_P}}(P,Q)=F_{Q^{\theta_Q}}(P,Q).
	\]
	This proves (i).
	
	For (ii), fix \(P,Q\in\Pd\). By the Lie--Trotter product formula,
	\[
	\FLE(P,Q)=\lim_{z\to\infty}F_z(P,Q);
	\]
	see, for example, \cite[Eq.~(1.8) and Appendix~A.1]{NuradhaMishraLeditzkyWilde2025}.
	Passing to the limit in
	\[
	\FM(P,Q)\le F_z(P,Q)\le \FU(P,Q)
	\qquad (z\ge 1/2)
	\]
	gives
	\[
	\FM(P,Q)\le \FLE(P,Q)\le \FU(P,Q).
	\]
	Applying Proposition~\ref{prop:exact-interval-realization}(ii) with
	\(\mathcal F=\FLE\), we obtain generally nonunique parameters
	\(\theta_P,\theta_Q\in[-1,1]\) such that
	\[
	\FLE(P,Q)=F_{P^{\theta_P}}(P,Q)=F_{Q^{\theta_Q}}(P,Q).
	\]
	This proves (ii).
	
	We now prove (iii). Fix \(z\in(0,1/2)\). To show failure in general, it suffices to
	exhibit one faithful-state counterexample.
	
	Choose unit vectors \(\psi,\phi\in\mathbb C^d\) with
	\[
	0<c:=|\langle\psi,\phi\rangle|<1,
	\]
	and define the rank-one projections
	\[
	P_0:=|\psi\rangle\langle\psi|,
	\qquad
	Q_0:=|\phi\rangle\langle\phi|.
	\]
	Since \(P_0^t=P_0\) and \(Q_0^t=Q_0\) for every \(t>0\), Lemma~\ref{lem:z-positive-representation}
	gives
	\[
	F_z(P_0,Q_0)
	=
	\Tr\!\left[\left(P_0^{\frac1{4z}}Q_0^{\frac1{2z}}P_0^{\frac1{4z}}\right)^z\right]
	=
	\Tr\!\left[(P_0Q_0P_0)^z\right].
	\]
	But \(P_0Q_0P_0=c^2P_0\), hence
	\[
	F_z(P_0,Q_0)=\Tr\!\left[(c^2P_0)^z\right]=c^{2z}.
	\]
	On the other hand,
	\[
	P_0^{1/2}Q_0P_0^{1/2}=P_0Q_0P_0=c^2P_0,
	\]
	and therefore
	\[
	\FU(P_0,Q_0)=\Tr\!\left[(c^2P_0)^{1/2}\right]=c.
	\]
	Because \(0<z<1/2\) and \(0<c<1\), we obtain
	\[
	c^{2z}>c,
	\qquad\text{i.e.}\qquad
	F_z(P_0,Q_0)>\FU(P_0,Q_0).
	\]
	
	We now regularize this singular pair to a full-rank pair. For \(\varepsilon\in(0,1)\), set
	\[
	P_\varepsilon:=(1-\varepsilon)P_0+\varepsilon I/d,
	\qquad
	Q_\varepsilon:=(1-\varepsilon)Q_0+\varepsilon I/d.
	\]
	Then \(P_\varepsilon,Q_\varepsilon\in\Pd\), and
	\[
	P_\varepsilon\to P_0,
	\qquad
	Q_\varepsilon\to Q_0
	\qquad (\varepsilon\downarrow0).
	\]
	Moreover,
	\[
	\Tr P_\varepsilon=(1-\varepsilon)\Tr P_0+\varepsilon\Tr(I/d)=1,
	\qquad
	\Tr Q_\varepsilon=(1-\varepsilon)\Tr Q_0+\varepsilon\Tr(I/d)=1,
	\]
	so \(P_\varepsilon\) and \(Q_\varepsilon\) are faithful states.
	
	The standard formula
	\[
	\FU(A,B)=\Tr\!\left[\left(A^{1/2}BA^{1/2}\right)^{1/2}\right]
	\]
	extends \(\FU\) continuously to positive semidefinite pairs. Likewise, by
	Lemma~\ref{lem:z-positive-representation}, the formula
	\[
	F_z(A,B)
	=
	\Tr\!\left[\left(A^{\frac1{4z}}B^{\frac1{2z}}A^{\frac1{4z}}\right)^z\right]
	\]
	extends \(F_z\) continuously to positive semidefinite pairs. Hence
	\[
	F_z(P_\varepsilon,Q_\varepsilon)\to c^{2z},
	\qquad
	\FU(P_\varepsilon,Q_\varepsilon)\to c.
	\]
	Since \(c^{2z}-c>0\), for all sufficiently small \(\varepsilon>0\) we still have
	\[
	F_z(P_\varepsilon,Q_\varepsilon)>\FU(P_\varepsilon,Q_\varepsilon).
	\]
	Fix such an \(\varepsilon\).
	
	If the value \(F_z(P_\varepsilon,Q_\varepsilon)\) admitted a generalized-fidelity
	representation, there would exist \(R\in\Pd\) such that
	\[
	F_z(P_\varepsilon,Q_\varepsilon)=F_R(P_\varepsilon,Q_\varepsilon).
	\]
	But Lemma~\ref{lem:uhlmann-variational} gives
	\[
	|F_R(P_\varepsilon,Q_\varepsilon)|
	\le
	\FU(P_\varepsilon,Q_\varepsilon),
	\]
	contradicting the strict inequality above. The same argument also rules out a
	representation of \(F_z(P_\varepsilon,Q_\varepsilon)\) as
	\(\operatorname{Re}F_R(P_\varepsilon,Q_\varepsilon)\), since
	\[
	\operatorname{Re}F_R(P_\varepsilon,Q_\varepsilon)
	\le
	|F_R(P_\varepsilon,Q_\varepsilon)|
	\le
	\FU(P_\varepsilon,Q_\varepsilon).
	\]
	
	Likewise, if the value \(F_z(P_\varepsilon,Q_\varepsilon)\) admitted an
	interior-fidelity representation, then there would exist a probability vector
	\(\mu=(\mu_1,\dots,\mu_n)\) and bases \(R_1,\dots,R_n\in\Pd\) such that
	\[
	F_z(P_\varepsilon,Q_\varepsilon)
	=
	\sum_{i=1}^n \mu_i F_{R_i}(P_\varepsilon,Q_\varepsilon).
	\]
	Since \(F_z(P_\varepsilon,Q_\varepsilon)\) is a positive real number, we may take
	absolute values and estimate
	\[
	F_z(P_\varepsilon,Q_\varepsilon)
	=
	\left|
	\sum_{i=1}^n \mu_i F_{R_i}(P_\varepsilon,Q_\varepsilon)
	\right|
	\le
	\sum_{i=1}^n \mu_i\,|F_{R_i}(P_\varepsilon,Q_\varepsilon)|
	\le
	\FU(P_\varepsilon,Q_\varepsilon),
	\]
	again a contradiction. Thus, taking \(\rho:=P_\varepsilon\) and
	\(\sigma:=Q_\varepsilon\), we obtain the faithful states asserted in (iii). In
	particular, \(F_z\) is, in general, neither a generalized fidelity nor an interior
	fidelity on \(\Pd\). This proves (iii).
\end{proof}

\begin{corollary}
	\label{cor:symmetrized-realization-z}
	Let \(P,Q\in\Pd\). Then for every \(z\in[1/2,\infty)\), there exists a generally nonunique
	parameter \(\theta_z(P,Q)\in[-1,1]\) such that
	\[
	F_z(P,Q)=\Fpol{\theta_z(P,Q)}(P,Q).
	\]
	Likewise, there exists a generally nonunique parameter
	\(\theta_{\mathrm{LE}}(P,Q)\in[-1,1]\) such that
	\[
	\FLE(P,Q)=\Fpol{\theta_{\mathrm{LE}}(P,Q)}(P,Q).
	\]
\end{corollary}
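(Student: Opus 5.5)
The argument is a direct combination of Proposition~\ref{prop:exact-interval-realization} with the two-sided bounds established in the proof of Theorem~\ref{thm:recover-z-from-generalized}. Fix \(P,Q\in\Pd\) and \(z\in[1/2,\infty)\). The proof of Theorem~\ref{thm:recover-z-from-generalized}(i) first normalizes to faithful states \(\rho=P/\Tr P\), \(\sigma=Q/\Tr Q\). It then checks that \(F_z\) is a normalized monotone quantum fidelity, so Lemma~\ref{lem:extremal-monotone-fidelity-bounds} applies, and it rescales back using Lemma~\ref{lem:joint-homogeneity}. This yields
\[
\FM(P,Q)\le F_z(P,Q)\le \FU(P,Q).
\]
I will quote this inequality directly rather than re-derive it. Likewise, the proof of part~(ii) gives \(\FM(P,Q)\le \FLE(P,Q)\le\FU(P,Q)\), obtained by passing to the limit \(z\to\infty\).

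Next I will invoke Proposition~\ref{prop:exact-interval-realization}(i). It says that the range of \(x\mapsto\Fpol{x}(P,Q)\) on \([-1,1]\) is exactly \([\FM(P,Q),\FU(P,Q)]\). This rests on two facts: the map is nondecreasing there (Corollary~\ref{cor:x-polar-monotone}), and it is continuous with endpoint values \(\Fpol{-1}=\FM\) and \(\Fpol{1}=\FU\). Since \(F_z(P,Q)\) lies in this interval, the intermediate value theorem produces some \(\theta_z(P,Q)\in[-1,1]\) with \(F_z(P,Q)=\Fpol{\theta_z(P,Q)}(P,Q)\). The same reasoning applied to \(\FLE(P,Q)\) produces \(\theta_{\mathrm{LE}}(P,Q)\). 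Equivalently, this is just the \(\vartheta\)-clause of Proposition~\ref{prop:exact-interval-realization}(ii), applied with \(\mathcal F=F_z\) and with \(\mathcal F=\FLE\).

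Finally, I will note why the parameter is generally nonunique. When \([P,Q]=0\), Corollary~\ref{cor:x-polar-monotone} shows that \(\Fpol{x}(P,Q)\) is constant in \(x\), so every \(x\in[-1,1]\) realizes the value.

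There is no real obstacle here. The only point needing care is that Proposition~\ref{prop:exact-interval-realization}(ii) is stated for functions satisfying the bounds for all pairs. The bounds for \(F_z\) (with \(z\ge1/2\)) and for \(\FLE\) do hold globally, so this causes no trouble. In any case, only the fixed-pair range statement in part~(i) is actually needed.
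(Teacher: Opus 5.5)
Your proposal is correct and takes essentially the same approach as the paper: it applies the \(\vartheta\)-clause of Proposition~\ref{prop:exact-interval-realization}(ii) with \(\mathcal F=F_z\) and \(\mathcal F=\FLE\), using the interval bounds \(\FM\le F_z\le\FU\) for \(z\ge 1/2\) and \(\FM\le\FLE\le\FU\) from Theorem~\ref{thm:recover-z-from-generalized}(i)--(ii). Your observations that only the fixed-pair range statement is needed, and that the parameter is nonunique in the commuting case, are accurate additions.
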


\begin{proof}
	This is Proposition~\ref{prop:exact-interval-realization}(ii) applied with
	\(\mathcal F=F_z\) and \(\mathcal F=\FLE\), using the interval bounds proved in
	Theorem~\ref{thm:recover-z-from-generalized}(i)--(ii).
\end{proof}

\begin{remark}
	\label{rem:answer-open-problem-3}
	Theorem~\ref{thm:recover-z-from-generalized} gives the pointwise positive-definite
	answer to Open Problem~3 of \cite{AfhamFerrie2024}. For \(z\ge 1/2\), every
	\(z\)-fidelity value is realized as a generalized-fidelity value along each of the
	one-sided polar paths \(R=P^x\) and \(R=Q^x\), with a parameter depending in general
	on the pair \((P,Q)\). The same holds for the Log-Euclidean fidelity. Conversely, when
	\(d\ge2\) and \(0<z<1/2\), such a pointwise realization fails in general, even if one
	allows finite convex combinations of generalized fidelities, i.e. interior fidelities.
	
	Corollary~\ref{cor:symmetrized-realization-z} records the corresponding realization
	through the symmetrized family \(\Fpol{x}\). The one-sided statement in
	Theorem~\ref{thm:recover-z-from-generalized}(i)--(ii) is stronger: the same value is
	obtained as a genuine generalized-fidelity value along each one-sided polar path, namely
	with bases \(R=P^{\theta_P}\) and \(R=Q^{\theta_Q}\).
	
	The positive result is pointwise and existential. It realizes individual values and does
	not assert the existence of a universal base, nor of a canonical rule \(R=R_z(P,Q)\),
	valid for all pairs. The proof ultimately combines the interval bounds for \(F_z\) and
	\(\FLE\) with the exact interval-realization statement in
	Proposition~\ref{prop:exact-interval-realization}(ii), whose proof relies on the
	one-sided monotonicity of \(\Phi_P\) and \(\Phi_Q\).
\end{remark}

\subsection{Partial data processing under commutative-interface hypotheses}

We close this section with a partial data-processing statement for the symmetrized
\(x\)-polar fidelities.

\begin{proposition}
	\label{prop:polar-commutative-interface}
	Fix \(x\in[-1,1]\), let \(P,Q\in\Pd\), and let \(\Lambda\) be a quantum channel such
	that \(\Lambda(P),\Lambda(Q)\in\Pd\). Then
	\[
	\Fpol{x}(\Lambda(P),\Lambda(Q))\ge \Fpol{x}(P,Q)
	\]
	whenever one of the following conditions is satisfied:
	\begin{enumerate}
		\item[(a)] \([P,Q]=0\);
		\item[(b)] \([\Lambda(P),\Lambda(Q)]=0\);
		\item[(c)] \(\Lambda=\Gamma\circ M\), where \(M\) is a quantum channel with
		commuting range, \(\Gamma\) is a quantum channel, and \(M(P),M(Q)\in\Pd\).
	\end{enumerate}
\end{proposition}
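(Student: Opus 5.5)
The plan is to sandwich \(\Fpol{x}\) between \(\FM\) and \(\FU\) using Proposition~\ref{prop:exact-interval-realization}(i), and to note that in the commuting case all these fidelities collapse to the classical value. Data processing will then follow from the known monotonicity of \(\FM\) and \(\FU\) under channels.

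For (a), assume \([P,Q]=0\). By Corollary~\ref{cor:x-polar-monotone}, \(\Fpol{x}(P,Q)=\Tr(P^{1/2}Q^{1/2})=\FM(P,Q)\). By Proposition~\ref{prop:exact-interval-realization}(i) applied to the output pair, \(\Fpol{x}(\Lambda(P),\Lambda(Q))\ge\FM(\Lambda(P),\Lambda(Q))\). Data processing for Matsumoto fidelity gives
\[
\FM(\Lambda(P),\Lambda(Q))\ge\FM(P,Q).
\]
We reduce to faithful states via Remark~\ref{rem:reduction-to-faithful-states}; this is legitimate because channels preserve traces, so \(\Tr\Lambda(P)=\Tr P\). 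Chaining these three facts gives the claim.

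For (b), assume \([\Lambda(P),\Lambda(Q)]=0\). Symmetrically, the output value satisfies \(\Fpol{x}(\Lambda(P),\Lambda(Q))=\Tr(\Lambda(P)^{1/2}\Lambda(Q)^{1/2})=\FU(\Lambda(P),\Lambda(Q))\). Data processing for Uhlmann fidelity gives \(\FU(\Lambda(P),\Lambda(Q))\ge\FU(P,Q)\). Proposition~\ref{prop:exact-interval-realization}(i) gives \(\FU(P,Q)\ge\Fpol{x}(P,Q)\).

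For (c), write \(\Lambda=\Gamma\circ M\). Since \(M\) has commuting range, \([M(P),M(Q)]=0\), and \(M(P),M(Q)\in\Pd\). Applying case (b) to the channel \(M\) yields \(\Fpol{x}(M(P),M(Q))\ge\Fpol{x}(P,Q)\). The pair \(M(P),M(Q)\) commutes and \(\Gamma\) maps it to \(\Lambda(P),\Lambda(Q)\in\Pd\), so case (a) applied to \(\Gamma\) gives \(\Fpol{x}(\Lambda(P),\Lambda(Q))\ge\Fpol{x}(M(P),M(Q))\). Composing the two inequalities finishes the proof.

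The only nontrivial input is the data-processing inequality for \(\FM\) and \(\FU\) on faithful outputs. This is the monotonicity half of the Matsumoto–Uhlmann extremality framework cited in Lemma~\ref{lem:extremal-monotone-fidelity-bounds}, and it should be invoked explicitly from \cite{Matsumoto2010,Watrous2018}.

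The main point requiring care is the collapse of \(\FM\) to the classical value. For commuting \(P,Q\), Corollary~\ref{cor:x-polar-monotone} at \(x=-1\) shows directly that \(\FM(P,Q)=\Fpol{-1}(P,Q)=\Tr(P^{1/2}Q^{1/2})\), so the collapse follows from the corollary itself rather than needing a separate argument.
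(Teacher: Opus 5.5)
Your proposal is correct and follows the paper's proof essentially step for step: the commuting-case collapse \(\Fpol{x}=\FM=\FU=\Tr(A^{1/2}B^{1/2})\), the sandwich from Proposition~\ref{prop:exact-interval-realization}(i), the data-processing inequalities for \(\FM\) and \(\FU\), and case (c) obtained by chaining (b) for \(M\) with (a) for \(\Gamma\). The only cosmetic difference is that you cite Corollary~\ref{cor:x-polar-monotone} for the commuting collapse, whereas the paper re-derives that identity inside the proof.
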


\begin{proof}
	Fix \(x\in[-1,1]\). Here a quantum channel means a completely positive trace-preserving
	linear map on matrices. Although such maps are usually stated on states, we apply them to
	positive definite matrices as linear maps; the state case extends to this setting by joint
	homogeneity and trace preservation, as in Remark~\ref{rem:reduction-to-faithful-states}.
	The displayed \(\Fpol{x}\)-terms are defined by the hypotheses; in case (c) this includes
	\(M(P),M(Q)\in\Pd\).
	
	We first record a simple commuting-case identity. If \([A,B]=0\), then
	\begin{align*}
		F_{A^x}(A,B)
		&=\Tr\!\left[A^{\frac{1-x}{2}}
		\Bigl(A^{\frac x2}BA^{\frac x2}\Bigr)^{1/2}\right] \\
		&=\Tr\!\left[A^{\frac{1-x}{2}}(A^xB)^{1/2}\right] \\
		&=\Tr\!\left[A^{\frac{1-x}{2}}A^{\frac x2}B^{1/2}\right] \\
		&=\Tr\!\left[A^{1/2}B^{1/2}\right].
	\end{align*}
	Similarly,
	\[
	F_{B^x}(A,B)=\Tr\!\left[A^{1/2}B^{1/2}\right].
	\]
	Consequently, whenever \([A,B]=0\),
	\[
	\Fpol{x}(A,B)=\Tr\!\left[A^{1/2}B^{1/2}\right]=\FM(A,B)=\FU(A,B).
	\]
	In the estimates below we use the standard data-processing inequalities for \(\FM\) and
	\(\FU\), extended from faithful states to \(\Pd\) by the homogeneity reduction just noted.
	
	Assume first that \([P,Q]=0\). By the commuting-case identity just proved,
	\[
	\Fpol{x}(P,Q)=\FM(P,Q).
	\]
	Using Proposition~\ref{prop:exact-interval-realization}(i) at the output pair and the
	DPI of \(\FM\), we obtain
	\[
	\Fpol{x}(\Lambda(P),\Lambda(Q))
	\ge \FM(\Lambda(P),\Lambda(Q))
	\ge \FM(P,Q)
	=\Fpol{x}(P,Q).
	\]
	
	\smallskip
	\noindent
	Assume next that \([\Lambda(P),\Lambda(Q)]=0\). Again by the commuting-case identity,
	\[
	\Fpol{x}(\Lambda(P),\Lambda(Q))=\FU(\Lambda(P),\Lambda(Q)).
	\]
	Using the DPI of \(\FU\) and Proposition~\ref{prop:exact-interval-realization}(i), we get
	\[
	\Fpol{x}(\Lambda(P),\Lambda(Q))
	=
	\FU(\Lambda(P),\Lambda(Q))
	\ge \FU(P,Q)
	\ge \Fpol{x}(P,Q).
	\]
	
	\smallskip
	\noindent
	Finally assume that \(\Lambda=\Gamma\circ M\), where \(M\) is a quantum channel with
	commuting range and \(M(P),M(Q)\in\Pd\). Since \(M(P)\) and \(M(Q)\) commute, the first
	part just proved, applied to the channel \(\Gamma\) and the pair \(M(P),M(Q)\), gives
	\[
	\Fpol{x}\bigl(\Gamma(M(P)),\Gamma(M(Q))\bigr)
	\ge
	\Fpol{x}(M(P),M(Q)).
	\]
	Moreover, the second part just proved, applied to the channel \(M\), gives
	\[
	\Fpol{x}(M(P),M(Q))\ge \Fpol{x}(P,Q),
	\]
	because the output pair \(M(P),M(Q)\) is commuting. Combining the two inequalities and
	using \(\Lambda=\Gamma\circ M\), we obtain
	\[
	\Fpol{x}(\Lambda(P),\Lambda(Q))\ge \Fpol{x}(P,Q).
	\]
	This proves the proposition.
\end{proof}

\begin{remark}
	\label{rem:polar-fidelity-form-dpi}
	Proposition~\ref{prop:polar-commutative-interface} is independent of the
	recovery question in Open Problem~3. It gives a positive data-processing result
	for the symmetrized polar fidelity under commutative-interface hypotheses:
	\[
	\Fpol{x}(\Lambda(P),\Lambda(Q))\ge \Fpol{x}(P,Q),
	\qquad x\in[-1,1].
	\]
	Since \(\Lambda\) is trace-preserving, this is equivalently the contraction of the
	associated Hellinger-type quantity
	\[
	\Tr(P+Q)-2\Fpol{x}(P,Q).
	\]
	Consequently, any counterexample to full data processing for an interior parameter
	\(x\in(-1,1)\) must be genuinely noncommutative: both the input pair and the output
	pair must be noncommuting, and the channel cannot factor through a commuting range
	in the positive-definite regime considered here.
\end{remark}

\section{Holevo bases and unitary factors }
\label{sec:holevo-unitaries}

\subsection{The polar slice of Holevo bases}
\label{sec:polar-slice}
We first classify the bases satisfying the sufficient polar condition
\eqref{eq:holevo-polar-condition}.
\begin{theorem}
	\label{thm:holevo-bases-classification}
	Let $P,Q\in\Pd$ and define
	\[
	H:=P^{-1/4}Q^{1/2}P^{-1/4}\in\Pd.
	\]
	For $R\in\Pd$, the following are equivalent:
	\begin{enumerate}
		\item[(i)]
		\[
		\operatorname{Pol}\bigl(R^{1/2}P^{1/2}\bigr)=\operatorname{Pol}\bigl(R^{1/2}Q^{1/2}\bigr).
		\]
		\item[(ii)] There exists $B\in\Pd$ such that $[B,H]=0$ and
		\[
		R=P^{-1/4}BP^{1/2}BP^{-1/4}.
		\]
	\end{enumerate}
	Moreover, whenever these equivalent conditions hold, one has
	\[
	F_R(P,Q)=F^{\mathrm H}(P,Q)=\Tr\bigl[P^{1/2}Q^{1/2}\bigr].
	\]
\end{theorem}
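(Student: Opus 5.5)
The plan is to translate the polar condition into a positivity condition using Lemma~\ref{lem:polar-basic}, solve that condition explicitly, and read off the fidelity value from the unitary-factor formula \eqref{eq:generalized-fidelity-unitary}. For (i)\(\Rightarrow\)(ii), let \(W\) be the common polar factor. By Lemma~\ref{lem:polar-basic}, this means
\[
A:=W^*R^{1/2}P^{1/2}\in\Pd
\qquad\text{and}\qquad
C:=W^*R^{1/2}Q^{1/2}\in\Pd .
\]
Put \(X:=W^*R^{1/2}\). Then \(X=AP^{-1/2}\) and \(C=AP^{-1/2}Q^{1/2}\). Since \(X^*X=R\), we also get \(R=P^{-1/2}A^2P^{-1/2}\).

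Next I will rewrite \(P^{-1/2}Q^{1/2}=P^{-1/4}HP^{1/4}\) and make the congruence substitution \(A=P^{1/4}BP^{1/4}\). Here \(B\in\Pd\) if and only if \(A\in\Pd\). With this substitution,
\[
C=P^{1/4}BHP^{1/4},
\]
so \(C\in\Pd\) if and only if \(BH\in\Pd\). By the last part of Lemma~\ref{lem:polar-basic}, this holds if and only if \([B,H]=0\). Substituting \(A\) into \(R=P^{-1/2}A^2P^{-1/2}\) gives \(R=P^{-1/4}BP^{1/2}BP^{-1/4}\), which is (ii).

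For (ii)\(\Rightarrow\)(i), I will run the same steps backwards. Given \(B\), set \(A:=P^{1/4}BP^{1/4}\in\Pd\) and \(X:=AP^{-1/2}\). Then \(X^*X=R\), so \(X=UR^{1/2}\) with \(U:=XR^{-1/2}\) unitary. It follows that
\[
R^{1/2}P^{1/2}=U^*A
\qquad\text{and}\qquad
R^{1/2}Q^{1/2}=U^*AP^{-1/2}Q^{1/2}=U^*P^{1/4}BHP^{1/4}.
\]
Because \(B\) commutes with \(H\), the matrix \(BH\) lies in \(\Pd\) by Lemma~\ref{lem:polar-basic}. Hence the second right factor is also positive definite. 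Uniqueness of the polar decomposition then gives both polar factors equal to \(U^*\).

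For the fidelity value, use \(U_P=\Pol(P^{1/2}R^{1/2})=\Pol(R^{1/2}P^{1/2})^*\), and similarly for \(U_Q\), both from Lemma~\ref{lem:polar-basic}. Under (i) these give \(U_P=U_Q\), so \(U_QU_P^*=I\). Formula \eqref{eq:generalized-fidelity-unitary} then yields \(F_R(P,Q)=\Tr[Q^{1/2}P^{1/2}]=\Tr[P^{1/2}Q^{1/2}]\). This equals \(F_I(P,Q)=\FH(P,Q)\), since \(R=I\) corresponds to \(B=P^{1/4}\cdot\) hmm—more simply, since \(F_I\) reduces directly to \(\Tr[P^{1/2}Q^{1/2}]\) from \eqref{eq:generalized-fidelity}.

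The main point needing care is the bookkeeping of the unitary \(W\) versus \(U^*\) and the adjoint in \(U_P\). One must also justify the converse step that \(X^*X=R\) forces \(X=UR^{1/2}\) with \(U\) unitary. This holds because \(XR^{-1/2}\) satisfies \((XR^{-1/2})^*(XR^{-1/2})=I\). The substantive step, the reduction to \(BH\in\Pd\iff[B,H]=0\), is supplied by Lemma~\ref{lem:polar-basic}.
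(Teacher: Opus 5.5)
Your proof is correct and follows the paper's argument essentially step for step. It uses the same substitution $A=P^{1/4}BP^{1/4}$ (equivalently $X=P^{1/4}BP^{-1/4}$ with $X^*X=R$), the same reduction of the second positivity condition to $BH\in\Pd\iff[B,H]=0$ via Lemma~\ref{lem:polar-basic}, and the same conclusion $U_QU_P^*=I$ from the unitary-factor formula. Delete the abandoned aside about $R=I$ in your last paragraph (for the record, $R=I$ corresponds to $B=I$), since $F_I(P,Q)=\Tr[P^{1/2}Q^{1/2}]$ follows directly from \eqref{eq:generalized-fidelity}.
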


\begin{proof}
	We first prove $(ii)\Rightarrow(i)$. Assume that $B\in\Pd$ commutes with
	$H=P^{-1/4}Q^{1/2}P^{-1/4}$ and define
	\[
	C:=P^{1/4}BP^{-1/4}.
	\]
	Then
	\[
	R=C^*C=P^{-1/4}BP^{1/2}BP^{-1/4}.
	\]
	Also,
	\[
	CP^{1/2}=P^{1/4}BP^{1/4}>0.
	\]
	Since $Q^{1/2}=P^{1/4}HP^{1/4}$, we also have
	\[
	CQ^{1/2}=P^{1/4}BH P^{1/4}=P^{1/4}HB P^{1/4}>0,
	\]
	because $B$ commutes with $H$ and both are positive definite.
	
	Now let $C=WR^{1/2}$ be the polar decomposition of $C$, so that $W=\operatorname{Pol}(C)$ is unitary.
	Then
	\[
	R^{1/2}P^{1/2}=W^*(CP^{1/2}),
	\qquad
	R^{1/2}Q^{1/2}=W^*(CQ^{1/2}).
	\]
	Since $CP^{1/2}$ and $CQ^{1/2}$ are positive definite, both matrices on the right-hand side
	have the same polar factor $W^*$. Hence
	\[
	\operatorname{Pol}\bigl(R^{1/2}P^{1/2}\bigr)=W^*=\operatorname{Pol}\bigl(R^{1/2}Q^{1/2}\bigr),
	\]
	which proves $(i)$.
	
	We now prove $(i)\Rightarrow(ii)$. Assume
	\[
	\operatorname{Pol}\bigl(R^{1/2}P^{1/2}\bigr)=\operatorname{Pol}\bigl(R^{1/2}Q^{1/2}\bigr)=:U.
	\]
	Define
	\[
	C:=U^*R^{1/2}.
	\]
	Then
	\[
	CP^{1/2}=U^*R^{1/2}P^{1/2}>0,
	\qquad
	CQ^{1/2}=U^*R^{1/2}Q^{1/2}>0,
	\]
	because the common polar factor has been removed. Let
	\[
	A:=CP^{1/2}\in\Pd.
	\]
	Then $C=AP^{-1/2}$, and therefore
	\[
	R=C^*C=P^{-1/2}A^2P^{-1/2}.
	\]
	Moreover,
	\[
	AP^{-1/2}Q^{1/2}=CQ^{1/2}>0.
	\]
	Now define
	\[
	B:=P^{-1/4}AP^{-1/4}\in\Pd.
	\]
	Using $H=P^{-1/4}Q^{1/2}P^{-1/4}$, we get
	\[
	AP^{-1/2}Q^{1/2}=P^{1/4}BHP^{1/4}>0.
	\]
	Conjugating by \(P^{-1/4}\) yields
	\[
	BH=P^{-1/4}\bigl(AP^{-1/2}Q^{1/2}\bigr)P^{-1/4}\in\Pd.
	\]
	By Lemma~\ref{lem:product-positive-iff-commute}, this implies
	\[
	BH=HB.
	\]
	Thus \( [B,H]=0 \). Finally,
	\[
	R=P^{-1/2}A^2P^{-1/2}
	=P^{-1/2}(P^{1/4}BP^{1/4})(P^{1/4}BP^{1/4})P^{-1/2}
	=P^{-1/4}BP^{1/2}BP^{-1/4}.
	\]
	This proves $(ii)$.
	
	It remains to prove that either condition implies $F_R(P,Q)=F^{\mathrm H}(P,Q)$. Recall that
	\[
	F_R(P,Q)=\Tr\bigl[Q^{1/2}U_QU_P^*P^{1/2}\bigr],
	\]
	where
	\[
	U_P:=\operatorname{Pol}(P^{1/2}R^{1/2}),
	\qquad
	U_Q:=\operatorname{Pol}(Q^{1/2}R^{1/2}).
	\]
	By Lemma~\ref{lem:polar-basic}, \(\Pol(X^*)=\Pol(X)^*\) for every invertible matrix \(X\).
	Hence condition \((i)\) implies
	\[
	U_P=\operatorname{Pol}(R^{1/2}P^{1/2})^*=\operatorname{Pol}(R^{1/2}Q^{1/2})^*=U_Q.
	\]
	Therefore,
	\[
	F_R(P,Q)=\Tr\bigl[Q^{1/2}P^{1/2}\bigr]=F^{\mathrm H}(P,Q).
	\]
	This completes the proof.
\end{proof}

\begin{corollary}
	\label{cor:explicit-holevo-bases}
	Let \(P,Q\in\Pd\) and \(H:=P^{-1/4}Q^{1/2}P^{-1/4}\). If \(B\in\Pd\) satisfies
	\([B,H]=0\), then
	\[
	R_B:=P^{-1/4}BP^{1/2}BP^{-1/4}
	\]
	satisfies
	\[
	F_{R_B}(P,Q)=F^{\mathrm H}(P,Q).
	\]
	In particular, if \(f:\sigma(H)\to(0,\infty)\) is any positive function on the spectrum, then
	\[
	R_f:=P^{-1/4}f(H)P^{1/2}f(H)P^{-1/4}
	\]
	is a Holevo base. Hence, for every \(t\in\mathbb R\),
	\[
	R_t:=P^{-1/4}H^tP^{1/2}H^tP^{-1/4}
	\]
	is a valid base for Holevo fidelity.
	
	If \(d\ge2\), the commutant family above contains infinitely many distinct bases satisfying
	the polar condition.
\end{corollary}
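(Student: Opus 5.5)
The first assertions follow directly from Theorem~\ref{thm:holevo-bases-classification}. Given \(B\in\Pd\) with \([B,H]=0\), condition (ii) of that theorem holds with this \(B\), so \(F_{R_B}(P,Q)=\FH(P,Q)\). Note also that \(R_B\in\Pd\), since \(R_B=C^*C\) with \(C=P^{1/4}BP^{-1/4}\) invertible. For a positive function \(f\) on \(\sigma(H)\), the functional calculus gives \(f(H)\in\Pd\), and \(f(H)\) commutes with \(H\). Hence \(B=f(H)\) is admissible. Taking \(f(t)=t^t\) in the sense \(f(\lambda)=\lambda^{t}\) yields \(B=H^t\), which covers \(R_t\).

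For the final claim with \(d\ge2\), I will exhibit infinitely many distinct \(R_B\). The key step is injectivity of \(B\mapsto R_B\) on \(\Pd\). We have \(R_B=C^*C\) with \(C=P^{1/4}BP^{-1/4}\). Then
\[
P^{1/4}R_BP^{1/4}=P^{1/4}P^{-1/4}BP^{1/2}BP^{-1/4}P^{1/4}=BP^{1/2}B.
\]
So it suffices to show that \(B\mapsto BP^{1/2}B\) is injective on \(\Pd\). This holds because \(X=BMB\) with \(M=P^{1/2}\in\Pd\) is a Riccati equation with unique positive solution \(B=M^{-1}\#X\), the matrix geometric mean. Concretely,
\[
M^{1/2}XM^{1/2}=(M^{1/2}BM^{1/2})^2,
\]
so \(M^{1/2}BM^{1/2}=(M^{1/2}XM^{1/2})^{1/2}\) is determined by \(X\), and hence so is \(B\).

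It remains to produce infinitely many admissible \(B\). The simplest choice is \(B=cI\) with \(c>0\), which gives \(R_B=c^2P^{-1/4}P^{1/2}P^{-1/4}=c^2 I\). These are distinct matrices for distinct \(c\), so there are infinitely many distinct bases. This works even for \(d=1\), however, so presumably the intended meaning is distinct up to the scaling of Lemma~\ref{lem:base-scaling}.

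To get bases distinct modulo scaling when \(d\ge2\), I will choose \(B=E_1+sE_2\), where \(E_1,E_2\) are complementary nonzero spectral-type projections commuting with \(H\). Since \(d\ge2\), the commutant of \(H\) contains two orthogonal nonzero projections summing to \(I\): take an eigenprojection of \(H\) and its complement if \(H\) is not scalar, or any such pair if \(H\) is scalar. For distinct \(s>0\), the resulting \(B\)'s are not positive multiples of one another. Since \(B\mapsto R_B\) is injective and satisfies \(R_{cB}=c^2R_B\), the bases \(R_B\) are pairwise non-proportional. This gives infinitely many genuinely distinct Holevo bases.

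The only nontrivial point is this injectivity/non-proportionality step, which the Riccati argument above handles.
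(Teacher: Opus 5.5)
Your proof is correct and follows the paper's route: the Holevo-base assertions come from Theorem~\ref{thm:holevo-bases-classification} with $B=f(H)$, and your injectivity step, which amounts to $P^{1/2}R_BP^{1/2}=(P^{1/4}BP^{1/4})^2$, is exactly the paper's. The only difference is the infinite family: you take $B=E_1+sE_2$ with complementary projections commuting with $H$ and check pairwise non-proportionality via $R_{cB}=c^2R_B$, whereas the paper uses $e^{tH}$ (non-scalar $H$) or arbitrary $B$ (scalar $H$) and records only distinctness; your version therefore makes explicit why $d\ge2$ is needed, since, as you observe, plain distinctness already follows from $B=cI$ even when $d=1$.
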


\begin{proof}
	The first assertion is exactly Theorem~\ref{thm:holevo-bases-classification}. The
	functional-calculus subfamily follows by taking \(B=f(H)\), and the power family follows
	by taking \(f(\lambda)=\lambda^t\).
	
	It remains only to justify the final infinitude statement. The map
	\[
	B\longmapsto R_B=P^{-1/4}BP^{1/2}BP^{-1/4}
	\]
	is injective. Indeed,
	\[
	P^{1/2}R_B P^{1/2}
	=
	(P^{1/4}BP^{1/4})^2,
	\]
	and therefore
	\[
	(P^{1/2}R_B P^{1/2})^{1/2}=P^{1/4}BP^{1/4},
	\]
	which determines \(B\) uniquely.
	
	If \(H\) is not a scalar matrix, then the matrices \(B_t=e^{tH}\), \(t\in\mathbb R\), give
	infinitely many distinct positive definite matrices commuting with \(H\). If \(H\) is scalar,
	then every positive definite \(B\) commutes with \(H\), and since \(d\ge2\) there are again
	infinitely many choices of \(B\). By injectivity, these give infinitely many distinct bases.
\end{proof}

\begin{remark}
	The functional subfamily \(B=H^t\) in Corollary~\ref{cor:explicit-holevo-bases} contains
	several simple examples:
	\[
	R_0=I,
	\qquad
	R_1=P^{-1/2}QP^{-1/2},
	\qquad
	R_{-1}=Q^{-1/2}PQ^{-1/2}.
	\]
	When \(H\) is scalar, this one-parameter power subfamily consists only of scalar multiples
	of \(I\), and hence is trivial up to the scaling invariance of the base. The full commutant
	family in Corollary~\ref{cor:explicit-holevo-bases}, however, remains infinite in every
	dimension \(d\ge2\).
\end{remark}

\begin{proposition}
	\label{prop:universal-holevo-base}
	Let \(R\in\Pd\) be fixed. If
	\[
	F_R(P,Q)=\FH(P,Q)
	\qquad\text{for every }P,Q\in\Pd,
	\]
	then \(R=cI\) for some \(c>0\). Conversely, every base \(R=cI\) has this universal
	Holevo property. In particular, up to the scaling invariance of the base, the identity is
	the unique universal Holevo base, even without imposing the polar condition
	\eqref{eq:holevo-polar-condition}.
\end{proposition}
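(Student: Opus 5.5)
The converse direction is immediate: by Lemma~\ref{lem:base-scaling}, \(F_{cI}(P,Q)=F_I(P,Q)=\FH(P,Q)\) for every pair. For the forward direction, the plan is to feed the identity \(F_R(P,Q)=\FH(P,Q)\) with well-chosen pairs for which both sides can be computed explicitly.

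The natural first choice is \(Q=I\) and \(P\) arbitrary. Then \(R^{1/2}QR^{1/2}=R\), so \((R^{1/2}QR^{1/2})^{1/2}=R^{1/2}\). Using \eqref{eq:generalized-fidelity},
\[
F_R(P,I)=\Tr\!\bigl[(R^{1/2}PR^{1/2})^{1/2}R^{-1/2}\bigr],
\qquad
\FH(P,I)=\Tr P^{1/2}.
\]
By Theorem~\ref{thm:one-sided-polar-monotone}, or directly, this holds automatically for some choices, so one pair will not suffice. Instead I would exploit equality in a variational bound. Write \(X:=P^{1/2}R^{1/2}\). Then \((R^{1/2}PR^{1/2})^{1/2}=|X|\) and \(X=\Pol(X)|X|\), so \(|X|R^{-1/2}=\Pol(X)^*P^{1/2}\). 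Hence
\[
F_R(P,I)=\Tr\bigl[\Pol(X)^*P^{1/2}\bigr]=\Tr\bigl[U_P^*P^{1/2}\bigr],
\]
which also follows directly from \eqref{eq:generalized-fidelity-unitary} with \(U_Q=\Pol(R^{1/2})=I\). Since \(P^{1/2}>0\), we have \(\operatorname{Re}\Tr[U^*P^{1/2}]\le\Tr P^{1/2}\) for every unitary \(U\), with equality if and only if \(U^*P^{1/2}\) is positive. Because \(P^{1/2}\) is invertible, this forces \(U=I\). Therefore the hypothesis gives \(\Pol(P^{1/2}R^{1/2})=I\) for every \(P\in\Pd\). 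By Lemma~\ref{lem:polar-basic}, this says \(P^{1/2}R^{1/2}\in\Pd\), hence \(P^{1/2}R^{1/2}=R^{1/2}P^{1/2}\) for every \(P\in\Pd\).

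Taking \(P^{1/2}\) to range over all of \(\Pd\), \(R^{1/2}\) commutes with every positive definite matrix. Since \(\Pd\) spans the full matrix algebra (every Hermitian matrix is a difference of positive definite ones), \(R^{1/2}\) is central, so \(R^{1/2}=\lambda I\) and \(R=cI\) with \(c=\lambda^2>0\).

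The main obstacle is the equality case in the trace inequality: one must justify that \(\operatorname{Re}\Tr[U^*A]=\Tr A\) with \(A>0\) and \(U\) unitary forces \(U=I\). I would argue this by diagonalizing \(A=\sum a_i e_ie_i^*\), which gives \(\operatorname{Re}\Tr[U^*A]=\sum a_i\operatorname{Re}\langle e_i,U^*e_i\rangle\). Since \(|\langle e_i,U^*e_i\rangle|\le1\), equality forces \(U^*e_i=e_i\) for all \(i\), because all \(a_i>0\). The final assertion, that no polar condition is needed, follows because the argument used only the equality \(F_R=\FH\) itself.
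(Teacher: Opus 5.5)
Your proof is correct and is essentially the paper's argument with the roles of \(P\) and \(Q\) swapped. The paper sets \(P=I\) and forces \(\Pol(Q^{1/2}R^{1/2})=I\) through the same equality case of \(\operatorname{Re}\Tr[UA]\le\Tr A\) for \(A\in\Pd\), then concludes from commutation with all of \(\Pd\); your remark that \(\Pd\) spans the full matrix algebra, so \(R^{1/2}\) is scalar, fills in a step the paper leaves implicit.
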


\begin{proof}
	If \(R=cI\), then \(F_R=F_I=\FH\) by Lemma~\ref{lem:base-scaling}. Conversely, assume
	that \(F_R(P,Q)=\FH(P,Q)\) for all \(P,Q\in\Pd\). Put \(P=I\). Then
	\[
	U_P=\Pol(I^{1/2}R^{1/2})=I,
	\qquad
	U_Q=\Pol(Q^{1/2}R^{1/2}),
	\]
	and the unitary-factor formula gives
	\[
	\Tr\!\left[Q^{1/2}U_Q\right]=\Tr(Q^{1/2})
	\qquad (Q\in\Pd).
	\]
	We claim that this forces \(U_Q=I\). Indeed, diagonalize
	\(Q^{1/2}=V\operatorname{diag}(\lambda_1,\ldots,\lambda_d)V^*\), with
	\(\lambda_i>0\), and set \(Z=V^*U_QV\). Taking real parts in
	\[
	\sum_i\lambda_i Z_{ii}=\sum_i\lambda_i
	\]
	gives \(\sum_i\lambda_i\operatorname{Re}Z_{ii}=\sum_i\lambda_i\). Since
	\(\operatorname{Re}Z_{ii}\le1\) and all \(\lambda_i>0\), we have \(Z_{ii}=1\) for every
	\(i\). Unitarity of \(Z\) then gives \(Z=I\), hence \(U_Q=I\).
	
	Therefore
	\[
	\Pol(Q^{1/2}R^{1/2})=I
	\qquad(Q\in\Pd),
	\]
	so \(Q^{1/2}R^{1/2}\in\Pd\) for every \(Q\in\Pd\). By
	Lemma~\ref{lem:product-positive-iff-commute}, \(Q^{1/2}\) commutes with \(R^{1/2}\) for
	every \(Q\in\Pd\). Hence \(R^{1/2}\), and therefore \(R\), commutes with every positive
	definite matrix. Thus \(R=cI\) for some \(c>0\).
\end{proof}

Theorem~\ref{thm:holevo-bases-classification} classifies the bases satisfying the
sufficient polar condition \eqref{eq:holevo-polar-condition}. The next subsection
removes the remaining gap by classifying all bases \(R\in\Pd\) for which
\[
F_R(P,Q)=\FH(P,Q)
\]
for a fixed pair \((P,Q)\), without assuming the polar condition.

\subsection{The full fixed-pair Holevo-base equation}
\label{sec:holevo-fixed-pair}

Throughout this subsection, the equality
\[
F_R(P,Q)=\FH(P,Q)
\]
is understood as equality of complex numbers. Since \(\FH(P,Q)=\Tr(P^{1/2}Q^{1/2})\)
is a positive real number, this is stronger than equality after taking real parts.

\begin{theorem}
	\label{thm:complete-classification-holevo-bases}
	Let $P,Q\in\Pd$ be fixed, and set
	\[
	M:=P^{-1/2}Q^{1/2}.
	\]
	For each $A\in\Pd$, define
	\[
	R_A:=P^{-1/2}A^2P^{-1/2},
	\qquad
	W_A:=\operatorname{Pol}(AM)^*.
	\]
	Then the map $A\mapsto R_A$ is a bijection from $\Pd$ onto $\Pd$, with inverse
	\[
	R\longmapsto (P^{1/2}RP^{1/2})^{1/2}.
	\]
	Moreover, for every $A\in\Pd$,
	\[
	F_{R_A}(P,Q)=\Tr\!\bigl[Q^{1/2}W_AP^{1/2}\bigr]
	=\Tr\!\Bigl[Q^{1/2}\operatorname{Pol}\bigl(AP^{-1/2}Q^{1/2}\bigr)^*P^{1/2}\Bigr].
	\]
	Consequently, the full set of bases for which generalized fidelity equals Holevo fidelity is
	\[
	\mathcal H(P,Q)
	:=\{R\in\Pd: F_R(P,Q)=F^{\mathrm H}(P,Q)\}
	\]
	and admits the exact description
	\[
	\mathcal H(P,Q)
	=
	\Bigl\{P^{-1/2}A^2P^{-1/2}: A\in\Pd,\ 
	\Tr\!\Bigl[Q^{1/2}\operatorname{Pol}\bigl(AP^{-1/2}Q^{1/2}\bigr)^*P^{1/2}\Bigr]
	=\Tr(P^{1/2}Q^{1/2})\Bigr\}.
	\]
	Equivalently,
	\[
	R_A\in\mathcal H(P,Q)
	\iff
	\Tr\!\Bigl[Q^{1/2}\bigl(\operatorname{Pol}(AP^{-1/2}Q^{1/2})^*-I\bigr)P^{1/2}\Bigr]=0.
	\]
\end{theorem}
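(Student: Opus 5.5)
The plan has three parts: show that \(A\mapsto R_A\) is a bijection, compute \(F_{R_A}(P,Q)\) through the unitary-factor formula \eqref{eq:generalized-fidelity-unitary}, and read off \(\mathcal H(P,Q)\) by substitution.

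For the bijection, note that \(R_A=P^{-1/2}A^2P^{-1/2}\) is congruent to \(A^2\), so it lies in \(\Pd\). Multiplying by \(P^{1/2}\) on both sides gives \(P^{1/2}R_AP^{1/2}=A^2\). Uniqueness of the positive square root then yields \((P^{1/2}R_AP^{1/2})^{1/2}=A\), which proves injectivity. For surjectivity, given \(R\in\Pd\), put \(A:=(P^{1/2}RP^{1/2})^{1/2}\in\Pd\). Then \(R_A=P^{-1/2}(P^{1/2}RP^{1/2})P^{-1/2}=R\). This is the same injectivity computation already used in Corollary~\ref{cor:explicit-holevo-bases}.

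The main step is the fidelity formula, and the key is to write \(R_A^{1/2}\) as a unitary times \(AP^{-1/2}\). Set \(C:=AP^{-1/2}\). Then \(C^*C=R_A\), so the polar decomposition gives \(C=VR_A^{1/2}\) with \(V:=\Pol(C)\) unitary, i.e.\ \(R_A^{1/2}=V^*AP^{-1/2}\). This yields
\[
P^{1/2}R_A^{1/2}=P^{1/2}V^*AP^{-1/2},
\]
but that expression is not obviously in polar form. It is cleaner to work with adjoints: \(R_A^{1/2}P^{1/2}=V^*A\), and \(A\in\Pd\), so Lemma~\ref{lem:polar-basic} gives \(\Pol(R_A^{1/2}P^{1/2})=V^*\) and hence \(U_P=\Pol(P^{1/2}R_A^{1/2})=V\). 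In the same way, \(R_A^{1/2}Q^{1/2}=V^*AP^{-1/2}Q^{1/2}=V^*AM\). Since \(\Pol(UX)=U\Pol(X)\), this gives \(\Pol(R_A^{1/2}Q^{1/2})=V^*\Pol(AM)\), so \(U_Q=\Pol(AM)^*V\). Therefore
\[
U_QU_P^*=\Pol(AM)^*VV^*=\Pol(AM)^*=W_A,
\]
and \eqref{eq:generalized-fidelity-unitary} gives \(F_{R_A}(P,Q)=\Tr[Q^{1/2}W_AP^{1/2}]\). The second displayed form follows from \(M=P^{-1/2}Q^{1/2}\).

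The point I expect to need the most care is the bookkeeping of adjoints and sides in the polar factors. The convention in \eqref{eq:generalized-fidelity-unitary} is \(U_P=\Pol(P^{1/2}R^{1/2})\), not \(\Pol(R^{1/2}P^{1/2})\), so I will convert carefully using \(\Pol(X^*)=\Pol(X)^*\). I will also check that \(V\) cancels on the correct side; here \(U_QU_P^*=\Pol(AM)^*VV^*\), so it does.

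Finally, the description of \(\mathcal H(P,Q)\) follows by substituting \(R=R_A\). Surjectivity lets every \(R\) be parametrized by a unique \(A\), and \(\FH(P,Q)=\Tr(P^{1/2}Q^{1/2})=\Tr(Q^{1/2}P^{1/2})\). Subtracting this from the formula for \(F_{R_A}(P,Q)\) and using linearity of the trace gives the equivalent condition \(\Tr[Q^{1/2}(\Pol(AP^{-1/2}Q^{1/2})^*-I)P^{1/2}]=0\).
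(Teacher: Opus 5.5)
Your proposal is correct and follows the paper's proof essentially step for step. It uses the same auxiliary matrix $C=AP^{-1/2}$ with $C^*C=R_A$ and the same polar decomposition of $C$ (your $V$ is the paper's $V_A^*$), reaching the identity $U_QU_P^*=\Pol(AM)^*$ that is fed into the unitary-factor formula, and the bijection is proved via the same explicit inverse $R\mapsto(P^{1/2}RP^{1/2})^{1/2}$.
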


\begin{proof}
	Fix $A\in\Pd$, and define
	\[
	C:=AP^{-1/2}.
	\]
	Then
	\[
	C^*C=P^{-1/2}A^2P^{-1/2}=R_A.
	\]
	Write the polar decomposition of $C$ as
	\[
	C=V_A^*R_A^{1/2},
	\qquad\text{equivalently}\qquad
	R_A^{1/2}=V_AC.
	\]
	Since $A>0$, we have
	\[
	R_A^{1/2}P^{1/2}=V_ACP^{1/2}=V_AA,
	\]
	whose positive factor is $A$. Hence
	\[
	\Pol(R_A^{1/2}P^{1/2})=V_A.
	\]
	Likewise,
	\[
	R_A^{1/2}Q^{1/2}=V_ACQ^{1/2}=V_AAP^{-1/2}Q^{1/2}=V_AAM,
	\]
	so, by Lemma~\ref{lem:polar-basic},
	\[
	\Pol(R_A^{1/2}Q^{1/2})=V_A\Pol(AM).
	\]
	Now let
	\[
	U_P:=\operatorname{Pol}(P^{1/2}R_A^{1/2}),
	\qquad
	U_Q:=\operatorname{Pol}(Q^{1/2}R_A^{1/2}).
	\]
	By Lemma~\ref{lem:polar-basic},
	\[
	U_P=\operatorname{Pol}(R_A^{1/2}P^{1/2})^*=V_A^*,
	\qquad
	U_Q=\operatorname{Pol}(R_A^{1/2}Q^{1/2})^*
	=\operatorname{Pol}(AM)^*V_A^*.
	\]
	Therefore
	\[
	U_QU_P^*=\operatorname{Pol}(AM)^*.
	\]
	Invoking the unitary-factor representation of generalized fidelity,
	\[
	F_{R_A}(P,Q)=\Tr\!\bigl[Q^{1/2}U_QU_P^*P^{1/2}\bigr],
	\]
	we get
	\[
	F_{R_A}(P,Q)=\Tr\!\bigl[Q^{1/2}\operatorname{Pol}(AM)^*P^{1/2}\bigr].
	\]
	This proves the stated formula.
	
	Next, the map $A\mapsto R_A$ is clearly injective. It is also surjective, since for every
	$R\in\Pd$, choosing
	\[
	A:=(P^{1/2}RP^{1/2})^{1/2}
	\]
	gives
	\[
	P^{-1/2}A^2P^{-1/2}=R.
	\]
	Thus $A\mapsto R_A$ is a bijection with the stated inverse.
	
	Finally, $F^{\mathrm H}(P,Q)=\Tr(P^{1/2}Q^{1/2})$, so the characterization of
	$\mathcal H(P,Q)$ follows immediately from the formula just proved.
\end{proof}

\begin{corollary}
	\label{cor:polar-slice-holevo}
	With the notation of Theorem~\ref{thm:complete-classification-holevo-bases}, the following are equivalent:
	\begin{enumerate}
		\item[(i)]
		\[
		\operatorname{Pol}(R_A^{1/2}P^{1/2})=\operatorname{Pol}(R_A^{1/2}Q^{1/2}).
		\]
		\item[(ii)]
		\[
		\operatorname{Pol}(AM)=I.
		\]
		\item[(iii)]
		\[
		AM>0.
		\]
	\end{enumerate}
	Hence the previously obtained family of Holevo bases corresponds exactly to the slice
	$W_A=I$ inside the full classification.
\end{corollary}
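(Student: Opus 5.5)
The plan is to reuse the computation from the proof of Theorem~\ref{thm:complete-classification-holevo-bases}, which already expresses both polar factors in terms of \(A\). There, with \(C=AP^{-1/2}\) and \(V_A\) the unitary such that \(R_A^{1/2}=V_AC\), we obtained
\[
\Pol(R_A^{1/2}P^{1/2})=V_A,
\qquad
\Pol(R_A^{1/2}Q^{1/2})=V_A\Pol(AM).
\]
The first identity holds because \(R_A^{1/2}P^{1/2}=V_AA\) with \(A\in\Pd\). The second comes from \(R_A^{1/2}Q^{1/2}=V_AAM\) and the left-unitary covariance in Lemma~\ref{lem:polar-basic}. I take both identities as given.

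For (i)\(\iff\)(ii), condition (i) reads \(V_A=V_A\Pol(AM)\). Since \(V_A\) is unitary, it can be cancelled on the left, so (i) holds exactly when \(\Pol(AM)=I\). For (ii)\(\iff\)(iii), apply the characterization in Lemma~\ref{lem:polar-basic} with \(W=I\): for invertible \(X\), \(\Pol(X)=I\) if and only if \(X\in\Pd\). Take \(X=AM\), which is invertible because \(A\), \(P^{-1/2}\) and \(Q^{1/2}\) are.

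For the final sentence, recall \(W_A=\Pol(AM)^*\), so (ii) is the same as \(W_A=I\). Condition (i) is the polar condition \eqref{eq:holevo-polar-condition} for the base \(R=R_A\). By Theorem~\ref{thm:holevo-bases-classification}, the bases satisfying it are exactly the family \(P^{-1/4}BP^{1/2}BP^{-1/4}\) with \([B,H]=0\). Since \(A\mapsto R_A\) is a bijection by Theorem~\ref{thm:complete-classification-holevo-bases}, that family corresponds exactly to \(\{A\in\Pd: W_A=I\}\). As a cross-check, I will match parameters via \(A=P^{1/4}BP^{1/4}\). Then \(AM=P^{1/4}BHP^{1/4}\), which is positive definite if and only if \(BH\in\Pd\), and by Lemma~\ref{lem:product-positive-iff-commute} this holds if and only if \([B,H]=0\).

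I expect no real obstacle; the argument is bookkeeping. The only point needing care is to use the same \(V_A\) in both polar identities, so that the cancellation in (i)\(\iff\)(ii) is legitimate.
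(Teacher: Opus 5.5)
Your proposal is correct and follows the paper's proof: you reuse the identities \(\Pol(R_A^{1/2}P^{1/2})=V_A\) and \(\Pol(R_A^{1/2}Q^{1/2})=V_A\Pol(AM)\) from the proof of Theorem~\ref{thm:complete-classification-holevo-bases}, cancel \(V_A\), and then apply \(\Pol(X)=I\iff X>0\). Your cross-check via \(A=P^{1/4}BP^{1/4}\), using \(AM=P^{1/4}BHP^{1/4}\), is a welcome addition: it makes the final ``slice \(W_A=I\)'' sentence explicit, whereas the paper leaves that step implicit.
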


\begin{proof}
	From the proof of Theorem~\ref{thm:complete-classification-holevo-bases},
	\[
	\operatorname{Pol}(R_A^{1/2}P^{1/2})=V_A,
	\qquad
	\operatorname{Pol}(R_A^{1/2}Q^{1/2})=V_A\operatorname{Pol}(AM).
	\]
	Thus (i) holds if and only if $\operatorname{Pol}(AM)=I$, proving (i)$\Leftrightarrow$(ii). For an invertible
	matrix $X$, one has $\operatorname{Pol}(X)=I$ if and only if $X>0$, so (ii)$\Leftrightarrow$(iii).
\end{proof}

\begin{example}
	\label{ex:nonpolar-holevo-base}
	Let
	\[
	P=\begin{pmatrix}4&0\\0&1\end{pmatrix},
	\qquad
	Q=\begin{pmatrix}2&1\\1&2\end{pmatrix},
	\qquad
	R_r:=\begin{pmatrix}r&0\\0&1\end{pmatrix}
	\quad (r>0).
	\]
	Then
	\[
	F^{\mathrm H}(P,Q)=\Tr(P^{1/2}Q^{1/2})=\frac{3}{2}(1+\sqrt3).
	\]
	Moreover,
	\[
	F_{R_r}(P,Q)=
	\frac{(4+\sqrt3)\sqrt r+2(1+\sqrt3)}{\sqrt{2r+2+2\sqrt{3r}}}.
	\]
	Therefore $F_{R_r}(P,Q)=F^{\mathrm H}(P,Q)$ if and only if, writing $u:=\sqrt r$,
	\[
	(u-1)\bigl((1-\sqrt3)u+(2+\sqrt3)\bigr)=0.
	\]
	Besides the trivial solution $u=1$ (that is, $R=I$), there is a second solution
	\[
	u=\frac{2+\sqrt3}{\sqrt3-1}=\frac{5+3\sqrt3}{2},
	\qquad
	r=u^2=13+\frac{15\sqrt3}{2}.
	\]
	Hence the base
	\[
	R_*=\begin{pmatrix}13+\frac{15\sqrt3}{2}&0\\0&1\end{pmatrix}
	\neq I
	\]
	satisfies
	\[
	F_{R_*}(P,Q)=F^{\mathrm H}(P,Q).
	\]
	However,
	\[
	R_*^{1/2}P^{1/2}=\begin{pmatrix}2\sqrt r&0\\0&1\end{pmatrix}>0,
	\]
	so
	\[
	\operatorname{Pol}(R_*^{1/2}P^{1/2})=I.
	\]
	On the other hand,
	\[
	Q^{1/2}=\frac12
	\begin{pmatrix}
		1+\sqrt3 & \sqrt3-1\\
		\sqrt3-1 & 1+\sqrt3
	\end{pmatrix},
	\]
	and therefore
	\[
	R_*^{1/2}Q^{1/2}
	=\frac12
	\begin{pmatrix}
		\sqrt r(1+\sqrt3) & \sqrt r(\sqrt3-1)\\
		\sqrt3-1 & 1+\sqrt3
	\end{pmatrix},
	\]
	which is not Hermitian because $r\ne1$. Hence $R_*^{1/2}Q^{1/2}$ is not positive definite,
	so
	\[
	\operatorname{Pol}(R_*^{1/2}Q^{1/2})\ne I.
	\]
	Thus $R_*$ is a Holevo base that does not satisfy the sufficient polar condition
	\[
	\operatorname{Pol}(R^{1/2}P^{1/2})=\operatorname{Pol}(R^{1/2}Q^{1/2}).
	\]
\end{example}

\begin{proof}[Proof of the formula in Example~\ref{ex:nonpolar-holevo-base}]
	Since $R_r$ commutes with $P$,
	\[
	(R_r^{1/2}PR_r^{1/2})^{1/2}=\begin{pmatrix}2\sqrt r&0\\0&1\end{pmatrix}.
	\]
	Also,
	\[
	R_r^{1/2}QR_r^{1/2}=\begin{pmatrix}2r&\sqrt r\\ \sqrt r&2\end{pmatrix},
	\qquad
	\det(R_r^{1/2}QR_r^{1/2})=3r.
	\]
	For a positive definite $2\times2$ matrix
	\[
	X=\begin{pmatrix}a&c\\c&b\end{pmatrix},
	\]
	one has
	\[
	X^{1/2}=\frac{1}{\sqrt{a+b+2\sqrt{ab-c^2}}}
	\begin{pmatrix}
		a+\sqrt{ab-c^2}&c\\ c&b+\sqrt{ab-c^2}
	\end{pmatrix}.
	\]
	Applying this with $a=2r$, $b=2$, $c=\sqrt r$ gives
	\[
	(R_r^{1/2}QR_r^{1/2})^{1/2}
	=
	\frac{1}{\sqrt{2r+2+2\sqrt{3r}}}
	\begin{pmatrix}
		2r+\sqrt{3r}&\sqrt r\\
		\sqrt r&2+\sqrt{3r}
	\end{pmatrix}.
	\]
	Hence
	\begin{align*}
		F_{R_r}(P,Q)
		&=\Tr\!\left[(R_r^{1/2}PR_r^{1/2})^{1/2}R_r^{-1}(R_r^{1/2}QR_r^{1/2})^{1/2}\right]\\
		&=
		\frac{2}{\sqrt r}\cdot\frac{2r+\sqrt{3r}}{\sqrt{2r+2+2\sqrt{3r}}}
		+
		\frac{2+\sqrt{3r}}{\sqrt{2r+2+2\sqrt{3r}}}\\
		&=
		\frac{(4+\sqrt3)\sqrt r+2(1+\sqrt3)}{\sqrt{2r+2+2\sqrt{3r}}}.
	\end{align*}
	Finally,
	\[
	Q^{1/2}=\frac12
	\begin{pmatrix}
		1+\sqrt3 & \sqrt3-1\\
		\sqrt3-1 & 1+\sqrt3
	\end{pmatrix},
	\]
	so
	\[
	F^{\mathrm H}(P,Q)=\Tr(P^{1/2}Q^{1/2})=2\cdot\frac{1+\sqrt3}{2}+\frac{1+\sqrt3}{2}
	=\frac32(1+\sqrt3).
	\]
	Setting \(u=\sqrt r>0\), the equality \(F_{R_r}(P,Q)=F^{\mathrm H}(P,Q)\) is an equality
	between positive quantities, so squaring is equivalent. A short simplification yields
	\[
	(u-1)\bigl((1-\sqrt3)u+(2+\sqrt3)\bigr)=0,
	\]
	and hence
	\[
	u=1
	\qquad\text{or}\qquad
	u=\frac{2+\sqrt3}{\sqrt3-1}=\frac{5+3\sqrt3}{2}.
	\]
\end{proof}

\begin{remark}
	Theorem~\ref{thm:complete-classification-holevo-bases} solves the full equality problem
	\[
	F_R(P,Q)=F^{\mathrm H}(P,Q)
	\]
	for fixed $P,Q\in\Pd$: every such base $R$ is obtained uniquely from a positive
	matrix $A=(P^{1/2}RP^{1/2})^{1/2}$, and the equality is equivalent to a single explicit trace
	constraint involving the polar factor of $AP^{-1/2}Q^{1/2}$. Corollary~\ref{cor:polar-slice-holevo}
	shows that the earlier family obtained from the sufficient polar condition is only the special
	slice where this polar factor is the identity.
	
	Example~\ref{ex:nonpolar-holevo-base} shows that this slice is proper: there exist nontrivial
	bases yielding Holevo fidelity for which the two polar factors are 
	\emph{not} equal.
\end{remark}

\subsection{Unitary factors and special-unitary consequences}
\label{sec:unitary-factors}

We now turn to the unitary factors that can arise from generalized fidelity.
For fixed \(P,Q\in\Pd\), we say that a unitary \(W\in U(d)\)
\emph{arises as a unitary factor of generalized fidelity} if there exists
\(R\in\Pd\) such that
\[
W=U_QU_P^*,
\qquad
U_P:=\Pol(P^{1/2}R^{1/2}),\quad
U_Q:=\Pol(Q^{1/2}R^{1/2}).
\]
\begin{theorem}
	\label{thm:unitary-factor-classification-fixed-pair}
	Fix $P,Q\in\Pd$, and define
	\[
	M:=P^{-1/2}Q^{1/2}.
	\]
	For a unitary matrix $W\in U(d)$, the following are equivalent.
	\begin{enumerate}
		\item[(i)] The unitary \(W\) arises as a unitary factor of generalized fidelity
		for the pair \((P,Q)\).
		\item[(ii)] There exists $A\in\Pd$ such that
		$
		A M W\in\Pd.
		$
		\item[(iii)] The matrix $MW$ is similar to a positive definite matrix.
		\item[(iv)] The matrix $MW$ is diagonalizable and all of its eigenvalues are strictly
		positive real numbers.
	\end{enumerate}
	Moreover, if $MW=SDS^{-1}$ for some $S\in GL_d(\mathbb C)$ and some $D\in\Pd$,
	then the set of all $A\in\Pd$ satisfying (ii) is exactly
	\[
	\mathcal A_W(P,Q)
	:=
	\{S^{-*}CS^{-1}: C\in\Pd,\ [C,D]=0\}.
	\]
	Consequently, the set of all bases $R\in\Pd$ whose unitary factor is $W$ is exactly
	\[
	\mathcal R_W(P,Q)
	:=
	\{P^{-1/2}A^2P^{-1/2}: A\in\mathcal A_W(P,Q)\}.
	\]
\end{theorem}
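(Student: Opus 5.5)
The plan is to reduce everything to the parameterization of Theorem~\ref{thm:complete-classification-holevo-bases}. By that theorem, every base has the form \(R=R_A=P^{-1/2}A^2P^{-1/2}\) for a unique \(A\in\Pd\), namely \(A=(P^{1/2}RP^{1/2})^{1/2}\). Its proof shows that the corresponding unitary factor is
\[
U_QU_P^*=\Pol(AM)^*.
\]
Hence \(W\) arises from some base if and only if \(\Pol(AM)=W^*\) for some \(A\in\Pd\). By Lemma~\ref{lem:polar-basic}, \(\Pol(X)=W^*\) is equivalent to \(XW\in\Pd\). Applying this with \(X=AM\) gives the equivalence (i)\(\Leftrightarrow\)(ii). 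The same computation shows that the bases with factor \(W\) are exactly the \(R_A\) with \(AMW\in\Pd\). So the final description of \(\mathcal R_W(P,Q)\) follows at once from the description of \(\mathcal A_W(P,Q)\), using the bijection \(A\mapsto R_A\).

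For (ii)\(\Rightarrow\)(iii), suppose \(AMW=:G\in\Pd\). Then
\[
MW=A^{-1}G=A^{-1/2}\bigl(A^{-1/2}GA^{-1/2}\bigr)A^{1/2},
\]
which is similar to the positive definite matrix \(A^{-1/2}GA^{-1/2}\). The equivalence (iii)\(\Leftrightarrow\)(iv) is the remark made after the definition of \(\sim\Pd\): a matrix similar to a positive definite matrix is diagonalizable with positive spectrum, and conversely a unitary diagonalization is not needed, only some invertible \(S\). For (iii)\(\Rightarrow\)(ii), write \(MW=SDS^{-1}\) and take \(A=S^{-*}S^{-1}\in\Pd\). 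Then
\[
AMW=S^{-*}S^{-1}SDS^{-1}=S^{-*}DS^{-1}\in\Pd.
\]

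The main obstacle is the exact description of \(\mathcal A_W(P,Q)\). For \(A=S^{-*}CS^{-1}\) with \(C\in\Pd\), we have \(AMW=S^{-*}CDS^{-1}\), which is congruent to \(CD\). If \([C,D]=0\), then \(CD\in\Pd\), so \(AMW\in\Pd\) by congruence. For the converse, take any \(A\in\Pd\) with \(AMW\in\Pd\) and set \(C:=S^*AS\in\Pd\). Then
\[
S^*(AMW)S=S^*ASDS^{-1}S=CD,
\]
and this is positive definite, being a congruence of a positive definite matrix. Since \(C\) and \(D\) are both positive definite, Lemma~\ref{lem:product-positive-iff-commute} gives \([C,D]=0\). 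Finally \(A=S^{-*}CS^{-1}\), so \(A\in\mathcal A_W(P,Q)\). The only delicate point is that the decomposition \(MW=SDS^{-1}\) is not unique. This is harmless, because the argument just given works for any fixed choice of \(S\) and \(D\).
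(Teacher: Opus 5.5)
Your proof is correct and follows essentially the same route as the paper's: the unitary factor of \(R_A=P^{-1/2}A^2P^{-1/2}\) is \(\Pol(AM)^*\), and the polar criterion of Lemma~\ref{lem:polar-basic} turns \(\Pol(AM)=W^*\) into \(AMW\in\Pd\). The congruence \(S^*(AMW)S=CD\) with \(C=S^*AS\), combined with the commuting-product lemma, then yields \(\mathcal A_W(P,Q)\).

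The differences are cosmetic. You cite the computation from the proof of Theorem~\ref{thm:complete-classification-holevo-bases} instead of redoing it via \(V=\Pol(R^{1/2}P^{1/2})\). You also state (iii)\(\Rightarrow\)(ii) explicitly through \(A=S^{-*}S^{-1}\) (the case \(C=I\)), which the paper obtains only implicitly from its description of \(\mathcal A_W(P,Q)\).
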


\begin{proof}
	We first prove $(i)\Leftrightarrow(ii)$. Let $R\in\Pd$ be arbitrary and set
	\[
	A:=(P^{1/2}RP^{1/2})^{1/2}\in\Pd.
	\]
	Let
	\[
	V:=\operatorname{Pol}(R^{1/2}P^{1/2})\in U(d).
	\]
	Then
	\[
	R^{1/2}P^{1/2}=VA,
	\qquad
	\text{hence}
	\qquad
	R^{1/2}=VAP^{-1/2}.
	\]
	Therefore,
	\[
	R^{1/2}Q^{1/2}=VAP^{-1/2}Q^{1/2}=VAM.
	\]
	By Lemma~\ref{lem:polar-basic},
	\[
	\operatorname{Pol}(R^{1/2}P^{1/2})=V,
	\qquad
	\operatorname{Pol}(R^{1/2}Q^{1/2})=V\operatorname{Pol}(AM),
	\]
	and hence
	\[
	U_P=\operatorname{Pol}(P^{1/2}R^{1/2})=V^*,
	\qquad
	U_Q=\operatorname{Pol}(Q^{1/2}R^{1/2})=\operatorname{Pol}(AM)^*V^*.
	\]
	Therefore
	\[
	U_QU_P^*=\operatorname{Pol}(AM)^*.
	\]
	It follows that
	\[
	U_QU_P^*=W
	\iff
	\operatorname{Pol}(AM)=W^*.
	\]
	Now for any invertible matrix \(Z\) and unitary \(U\),
	\[
	\operatorname{Pol}(Z)=U
	\iff
	U^*Z\in\Pd,
	\]
	because \(Z=U|Z|\) if and only if \(U^*Z=|Z|\in\Pd\).
	Applying this with \(Z=AM\) and \(U=W^*\), we obtain
	\[
	\operatorname{Pol}(AM)=W^*
	\iff
	WAM\in\Pd.
	\]
	Since
	\[
	AMW=W^*(WAM)W,
	\]
	unitary conjugation shows that
	\[
	WAM\in\Pd
	\iff
	AMW\in\Pd.
	\]
	Hence
	\[
	U_QU_P^*=W
	\iff
	AMW\in\Pd.
	\]
	This proves that, for every \(R\in\Pd\) and the associated matrix
	\[
	A:=(P^{1/2}RP^{1/2})^{1/2},
	\]
	one has
	\[
	U_QU_P^*=W
	\iff
	AMW\in\Pd.
	\]
	Conversely, if \(A\in\Pd\) is given and satisfies \(AMW\in\Pd\), define
	\[
	R:=P^{-1/2}A^2P^{-1/2}\in\Pd.
	\]
	Applying the preceding computation to this \(R\), whose associated positive matrix is
	precisely the chosen \(A\), yields \(U_QU_P^*=W\). Hence \((i)\Leftrightarrow(ii)\).	
	We next prove $(ii)\Rightarrow(iii)$. If $AMW\in\Pd$, then
	\[
	A^{1/2}(MW)A^{-1/2}=A^{-1/2}(AMW)A^{-1/2}\in\Pd,
	\]
	so $MW$ is similar to a positive definite matrix.
	
	The implication $(iii)\Rightarrow(iv)$ is immediate, because a positive definite matrix is
	Hermitian with strictly positive spectrum, and similarity preserves both spectrum and
	diagonalizability.
	
	Conversely, assume (iv). Then $MW=SDS^{-1}$ for some $S\in GL_d(\mathbb C)$ and some
	positive diagonal matrix $D$, hence in particular $D\in\Pd$. Therefore $MW$ is
	similar to the positive definite matrix $D$, which proves $(iv)\Rightarrow(iii)$.
	
	It remains to characterize all $A$ satisfying (ii). Fix a decomposition $MW=SDS^{-1}$ with
	$D\in\Pd$, and let $A\in\Pd$. Set
	\[
	B:=S^*AS\in\Pd.
	\]
	Then
	\[
	AMW=A(SDS^{-1})=S^{-*}BDS^{-1}.
	\]
	Since congruence by an invertible matrix preserves positive definiteness,
	\[
	AMW\in\Pd
	\iff
	BD\in\Pd.
	\]
	Since $B,D\in\Pd$, the matrix $BD$ is positive definite if and only if it is
	Hermitian, and this is equivalent to
	\[
	BD=(BD)^*=DB.
	\]
	Therefore
	\[
	AMW\in\Pd
	\iff
	[B,D]=0.
	\]
	Equivalently,
	\[
	A\in\Pd\text{ satisfies (ii)}
	\iff
	A=S^{-*}CS^{-1}
	\text{ for some }C\in\Pd\text{ with }[C,D]=0.
	\]
	This proves the formula for $\mathcal A_W(P,Q)$, and the description of
	$\mathcal R_W(P,Q)$ follows from the identity $R=P^{-1/2}A^2P^{-1/2}$.
\end{proof}

The next result combines the trace criterion from
Theorem~\ref{thm:complete-classification-holevo-bases} with the unitary-factor
classification above and gives the full fixed-pair stratification of Holevo bases
by unitary strata.
\begin{theorem}
	\label{thm:complete-structural-classification-holevo-bases}
	Fix \(P,Q\in\Pd\), and again write
	\[
	X:=P^{1/2}Q^{1/2},
	\qquad
	M:=P^{-1/2}Q^{1/2}.
	\]
	Define the set of \emph{Holevo bases}
	\[
	\mathcal H(P,Q):=
	\{R\in\Pd: F_R(P,Q)=\FH(P,Q)\}.
	\]
	Then \(\mathcal H(P,Q)\) decomposes as the disjoint union
	\[
	\mathcal H(P,Q)
	=
	\bigcup_{W\in\mathcal U_H(P,Q)} \mathcal R_W(P,Q),
	\]
	where
	\[
	\mathcal U_H(P,Q)
	:=
	\{W\in U(d): MW\sim\Pd,\ \Tr(WX)=\Tr(X)\},
	\]
	and \(\mathcal R_W(P,Q)\) is the \(W\)-stratum described in
	Theorem~\ref{thm:unitary-factor-classification-fixed-pair}.
	Equivalently, $R\in\mathcal H(P,Q)$ if and only if there exist a unitary $W\in U(d)$ and a 	positive matrix $A\in\Pd$ such that
	\[
	AMW\in\Pd,
	\qquad
	\Tr(WX)=\Tr(X),
	\qquad
	R=P^{-1/2}A^2P^{-1/2}.
	\]
	More explicitly, if $MW=SDS^{-1}$ with $D\in\Pd$, then the corresponding Holevo
	bases in the $W$-stratum are precisely
	\[
	R=P^{-1/2}A^2P^{-1/2},
	\qquad
	A=S^{-*}CS^{-1},
	\qquad
	C\in\Pd,
	\ [C,D]=0,
	\]
	with the sole additional requirement that
	\[
	\Tr(WP^{1/2}Q^{1/2})=\Tr(P^{1/2}Q^{1/2}).
	\]
\end{theorem}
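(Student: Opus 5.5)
The proof combines Theorem~\ref{thm:complete-classification-holevo-bases} with Theorem~\ref{thm:unitary-factor-classification-fixed-pair}. Every base \(R\in\Pd\) has a well-defined unitary factor \(W_R:=U_QU_P^*\), because polar factors of invertible matrices are unique. Hence the strata \(\mathcal R_W(P,Q)\), \(W\in U(d)\), are pairwise disjoint and cover \(\Pd\). By Theorem~\ref{thm:unitary-factor-classification-fixed-pair}, \(\mathcal R_W(P,Q)\) is nonempty exactly when \(MW\sim\Pd\). So it suffices to show that the condition \(F_R(P,Q)=\FH(P,Q)\) depends only on the stratum \(W=W_R\), and is equivalent there to \(\Tr(WX)=\Tr(X)\).

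For this I use the unitary-factor formula \eqref{eq:generalized-fidelity-unitary}, \(F_R(P,Q)=\Tr[Q^{1/2}WP^{1/2}]\) with \(W=U_QU_P^*\). Cyclicity of the trace gives \(\Tr[Q^{1/2}WP^{1/2}]=\Tr[WP^{1/2}Q^{1/2}]=\Tr(WX)\). Since \(\FH(P,Q)=\Tr(P^{1/2}Q^{1/2})=\Tr(X)\), we get \(R\in\mathcal H(P,Q)\iff\Tr(W_RX)=\Tr(X)\). This proves the disjoint-union formula with index set \(\mathcal U_H(P,Q)\). The condition \(MW\sim\Pd\) in \(\mathcal U_H\) only discards empty strata; by Theorem~\ref{thm:unitary-factor-classification-fixed-pair}, it is exactly nonemptiness. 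Alternatively, one can use the formula \(F_{R_A}=\Tr[Q^{1/2}\Pol(AM)^*P^{1/2}]\) from Theorem~\ref{thm:complete-classification-holevo-bases} and set \(W=\Pol(AM)^*\).

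For the ``equivalently'' statement, I use the proof of Theorem~\ref{thm:unitary-factor-classification-fixed-pair}. For any \(R\), put \(A=(P^{1/2}RP^{1/2})^{1/2}\). Then \(R=P^{-1/2}A^2P^{-1/2}\), and \(U_QU_P^*=W\) holds iff \(AMW\in\Pd\). Combining this with the trace condition gives both directions:
\begin{itemize}
\item Given \(R\in\mathcal H(P,Q)\), take \(W=W_R\) and this \(A\).
\item Conversely, given \(W\) and \(A\) with \(AMW\in\Pd\), the base \(R=P^{-1/2}A^2P^{-1/2}\) has factor \(W\), so its fidelity is \(\Tr(WX)=\Tr(X)\).
\end{itemize}

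The explicit parametrization \(A=S^{-*}CS^{-1}\) with \([C,D]=0\) is then just the description of \(\mathcal A_W(P,Q)\) from Theorem~\ref{thm:unitary-factor-classification-fixed-pair}. The trace condition involves only \(W\), not \(A\), so it is the sole extra requirement.

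The argument is short, and I expect no real obstacle. The one point needing care is disjointness together with the identification of nonempty strata. Disjointness relies on the unitary factor being a function of \(R\), which holds by uniqueness of the polar decomposition of invertible matrices. I should also note that the trace equality is an equality of complex numbers, as stipulated in Section~\ref{sec:holevo-fixed-pair}.
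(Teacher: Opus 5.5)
Your proposal is correct and follows essentially the same route as the paper's proof. Both rewrite $F_R(P,Q)=\Tr[Q^{1/2}W_RP^{1/2}]=\Tr(W_RX)$, compare it with $\FH(P,Q)=\Tr(X)$, invoke Theorem~\ref{thm:unitary-factor-classification-fixed-pair} for the strata, and obtain disjointness from the uniqueness of $W_R=U_QU_P^*$. Your spelled-out treatment of the ``equivalently'' clause via $A=(P^{1/2}RP^{1/2})^{1/2}$ just makes explicit what the paper leaves implicit.
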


\begin{proof}
	Let $R\in\Pd$, and denote its unitary factor by
	\[
	W_R:=U_QU_P^*.
	\]
	By the equivalent form of generalized fidelity,
	\[
	F_R(P,Q)=\Tr\bigl[Q^{1/2}W_RP^{1/2}\bigr]=\Tr(W_RX).
	\]
	On the other hand,
	\[
	\FH(P,Q)=\Tr(P^{1/2}Q^{1/2})=\Tr(X).
	\]
	Hence
	\[
	F_R(P,Q)=\FH(P,Q)
	\iff
	\Tr(W_RX)=\Tr(X).
	\]
	By Theorem~\ref{thm:unitary-factor-classification-fixed-pair}, the possible values of $W_R$
	are exactly those unitaries $W$ for which $MW$ is similar to a positive definite matrix, and
	for each such $W$ the corresponding bases are precisely the elements of $\mathcal R_W(P,Q)$.
	Intersecting these strata with the single scalar constraint $\Tr(WX)=\Tr(X)$ yields the
	claimed description of $\mathcal H(P,Q)$. The union is disjoint because the unitary factor \(W_R=U_QU_P^*\) is uniquely determined by \(R\).
\end{proof}

\begin{corollary}
	\label{cor:polar-condition-W-equals-I}
	Fix \(P,Q\in\Pd\), and set \(M:=P^{-1/2}Q^{1/2}\). The condition
	\eqref{eq:holevo-polar-condition} is exactly the \(W=I\) stratum in
	Theorem~\ref{thm:complete-structural-classification-holevo-bases}. Equivalently,
	\[
	\operatorname{Pol}(R^{1/2}P^{1/2})=\operatorname{Pol}(R^{1/2}Q^{1/2})
	\iff
	R=P^{-1/2}A^2P^{-1/2}
	\text{ for some }A\in\Pd\text{ with }AM\in\Pd.
	\]
	Thus the family obtained from \eqref{eq:holevo-polar-condition} is one stratum of the full
	solution, but not in general the whole solution set.
\end{corollary}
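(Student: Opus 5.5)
The plan is to read this corollary directly off the fixed-pair parameterization of Theorem~\ref{thm:unitary-factor-classification-fixed-pair}, with no new analysis. I will fix \(R\in\Pd\) and set \(A:=(P^{1/2}RP^{1/2})^{1/2}\), so that \(R=P^{-1/2}A^2P^{-1/2}\) by the bijection of Theorem~\ref{thm:complete-classification-holevo-bases}. I will also set \(V:=\Pol(R^{1/2}P^{1/2})\).

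The key identity is already in the proof of Theorem~\ref{thm:unitary-factor-classification-fixed-pair}. It gives \(R^{1/2}=VAP^{-1/2}\), and hence
\[
\Pol(R^{1/2}P^{1/2})=V,\qquad \Pol(R^{1/2}Q^{1/2})=V\,\Pol(AM).
\]
Condition \eqref{eq:holevo-polar-condition} therefore holds if and only if \(\Pol(AM)=I\). By Lemma~\ref{lem:polar-basic} with \(W=I\), this is equivalent to \(AM\in\Pd\); this is Corollary~\ref{cor:polar-slice-holevo}.

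Next, since \(U_QU_P^*=\Pol(AM)^*\), the condition \(\Pol(AM)=I\) says precisely that the unitary factor of \(R\) is \(W=I\). Equivalently, \(R\in\mathcal R_I(P,Q)\), using the characterization \(AMW\in\Pd\) at \(W=I\). I also need \(I\) to belong to \(\mathcal U_H(P,Q)\), so that this stratum really appears in the decomposition of Theorem~\ref{thm:complete-structural-classification-holevo-bases}. The trace constraint \(\Tr(IX)=\Tr(X)\) is trivially satisfied. The condition \(M\sim\Pd\) holds because \(M=P^{-1/2}Q^{1/2}\) is similar to \(P^{-1/4}Q^{1/2}P^{-1/4}=H\in\Pd\).

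Putting these together, the bases satisfying \eqref{eq:holevo-polar-condition} are exactly \(\mathcal R_I(P,Q)\), which is one stratum of the disjoint union. The converse direction needs nothing extra: given \(A\in\Pd\) with \(AM\in\Pd\), define \(R:=P^{-1/2}A^2P^{-1/2}\). Its associated positive matrix is precisely \(A\), so the displayed computation applies to it.

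For the final sentence, that this stratum is not in general the whole solution set, I will cite Example~\ref{ex:nonpolar-holevo-base}. There \(R_*\in\mathcal H(P,Q)\) but \(R_*\) violates the polar condition, so its unitary factor \(W\neq I\) lies in \(\mathcal U_H(P,Q)\).

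There is no real obstacle here. The only point needing care is keeping the adjoint conventions consistent: \(U_P=\Pol(P^{1/2}R^{1/2})\) is the adjoint of \(\Pol(R^{1/2}P^{1/2})\). With that in place, "\(W=I\)" is literally the same condition as \(\Pol(AM)=I\), and not its adjoint or some conjugate of it.
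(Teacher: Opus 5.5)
Your proposal is correct and follows essentially the paper's route. The paper notes that \(U_P=\Pol(R^{1/2}P^{1/2})^*\) and \(U_Q=\Pol(R^{1/2}Q^{1/2})^*\), so the polar condition is \(U_P=U_Q\), i.e.\ \(W_R=I\), and then reads off the description from Theorem~\ref{thm:unitary-factor-classification-fixed-pair} at \(W=I\); you instead pass through \(\Pol(AM)=I\) (Corollary~\ref{cor:polar-slice-holevo}) first, which is the same parameterization in a different order, and your explicit checks that \(I\in\mathcal U_H(P,Q)\) (via \(M=P^{-1/4}HP^{1/4}\)) and that Example~\ref{ex:nonpolar-holevo-base} gives a stratum with \(W\neq I\) are correct details the paper's proof leaves implicit.
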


\begin{proof}
	Let
	\[
	U_P:=\Pol(P^{1/2}R^{1/2}),
	\qquad
	U_Q:=\Pol(Q^{1/2}R^{1/2}),
	\qquad
	W_R:=U_QU_P^*.
	\]
	By Lemma~\ref{lem:polar-basic},
	\[
	U_P=\Pol(P^{1/2}R^{1/2})
	=\Pol\!\bigl((R^{1/2}P^{1/2})^*\bigr)
	=\Pol(R^{1/2}P^{1/2})^*,
	\]
	and similarly
	\[
	U_Q=\Pol(Q^{1/2}R^{1/2})
	=\Pol\!\bigl((R^{1/2}Q^{1/2})^*\bigr)
	=\Pol(R^{1/2}Q^{1/2})^*.
	\]
	Therefore,
	\[
	\operatorname{Pol}(R^{1/2}P^{1/2})
	=\operatorname{Pol}(R^{1/2}Q^{1/2})
	\iff
	U_P=U_Q
	\iff
	W_R=I.
	\]
	The rest follows from Theorem~\ref{thm:unitary-factor-classification-fixed-pair} with
	\(W=I\).
\end{proof}

\begin{corollary}
	\label{cor:commuting-pairs-only-I}
	If \([P,Q]=0\), then
	\[
	\mathcal U_H(P,Q)=\{I\},
	\qquad
	\mathcal H(P,Q)=\mathcal R_I(P,Q).
	\]
	Equivalently, for commuting pairs the polar slice exhausts all Holevo bases.
\end{corollary}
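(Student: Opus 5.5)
The plan is to show that, when \([P,Q]=0\), any unitary \(W\) with \(MW\sim\Pd\) and \(\Tr(WX)=\Tr(X)\) must equal \(I\). The description of \(\mathcal H(P,Q)\) then follows at once from Theorem~\ref{thm:complete-structural-classification-holevo-bases}. Note that \(I\in\mathcal U_H(P,Q)\) trivially: \(M=P^{-1/2}Q^{1/2}\) is positive definite because \(P\) and \(Q\) commute, and \(\Tr(IX)=\Tr X\). So only the reverse inclusion needs work.

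The key step is the trace condition, and it can be handled exactly as in the proof of Proposition~\ref{prop:universal-holevo-base}. Since \([P,Q]=0\), the matrix \(X=P^{1/2}Q^{1/2}\) is positive definite. Diagonalize \(X=V\operatorname{diag}(\lambda_1,\dots,\lambda_d)V^*\) with all \(\lambda_i>0\), and set \(Z:=V^*WV\), which is unitary. The condition \(\Tr(WX)=\Tr(X)\) becomes \(\sum_i\lambda_iZ_{ii}=\sum_i\lambda_i\). Taking real parts gives \(\sum_i\lambda_i\operatorname{Re}Z_{ii}=\sum_i\lambda_i\). Because \(\operatorname{Re}Z_{ii}\le|Z_{ii}|\le1\) and every \(\lambda_i>0\), this forces \(Z_{ii}=1\) for all \(i\). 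Each column of the unitary \(Z\) has unit norm, so all off-diagonal entries vanish, and hence \(Z=I\) and \(W=I\). The similarity condition \(MW\sim\Pd\) plays no role in this step.

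It follows that \(\mathcal U_H(P,Q)=\{I\}\). Theorem~\ref{thm:complete-structural-classification-holevo-bases} then gives \(\mathcal H(P,Q)=\mathcal R_I(P,Q)\). By Corollary~\ref{cor:polar-condition-W-equals-I}, \(\mathcal R_I(P,Q)\) is exactly the set of bases satisfying the polar condition \eqref{eq:holevo-polar-condition}, which is the equivalent formulation in the statement. No step is a real obstacle. The only point to check is that commutativity makes \(X\) positive definite, rather than merely having positive spectrum, so that the diagonal-entry argument applies.
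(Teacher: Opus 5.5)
Your proposal is correct and follows essentially the same route as the paper: diagonalize the positive definite \(X=P^{1/2}Q^{1/2}\), use the trace condition to force the diagonal entries of the conjugated unitary to equal \(1\) (you use unit-norm columns where the paper uses rows), and then invoke Theorem~\ref{thm:complete-structural-classification-holevo-bases}. You also check explicitly that \(I\in\mathcal U_H(P,Q)\), which the paper's proof leaves implicit.
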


\begin{proof}
	If \([P,Q]=0\), then
	\[
	X=P^{1/2}Q^{1/2}\in\Pd.
	\]
	Let \(W\in\mathcal U_H(P,Q)\). By definition,
	\[
	\Tr(WX)=\Tr(X).
	\]
	Choose a unitary \(U\) such that
	\[
	X=U\Lambda U^*,
	\qquad
	\Lambda=\operatorname{diag}(\lambda_1,\dots,\lambda_d),
	\qquad
	\lambda_i>0,
	\]
	and set \(V:=U^*WU\). Then
	\[
	\sum_{i=1}^d \lambda_i V_{ii}
	=\Tr(V\Lambda)
	=\Tr(WX)
	=\Tr(X)
	=\sum_{i=1}^d \lambda_i.
	\]
	Taking real parts gives
	\[
	\sum_{i=1}^d \lambda_i \Re(V_{ii})=\sum_{i=1}^d \lambda_i.
	\]
	Since \(|V_{ii}|\le 1\), one has \(\Re(V_{ii})\le 1\) for every \(i\). Because each
	\(\lambda_i>0\), the last equality forces \(\Re(V_{ii})=1\) for all \(i\), hence
	\(V_{ii}=1\) for all \(i\). For each row \(i\),
	\[
	1=\sum_{j=1}^d |V_{ij}|^2 \ge |V_{ii}|^2=1,
	\]
	so \(V_{ij}=0\) for every \(j\ne i\). Hence \(V=I\), and therefore \(W=I\).
	The conclusion follows from Theorem~\ref{thm:complete-structural-classification-holevo-bases}.
\end{proof}

\begin{remark}
	\label{rem:why-complete-structural-classification}
	Theorem~\ref{thm:complete-structural-classification-holevo-bases} gives the strongest
	fixed-pair form of the Holevo-base classification. Indeed, the freedom in the base splits
	into two clean pieces:
	\begin{enumerate}
		\item[(a)] a \emph{unitary stratum} \(W\), constrained only by the two conditions
		\[
		MW\sim \Pd,
		\qquad
		\Tr(WX)=\Tr(X),
		\]
		where \(MW\sim \Pd\) means that \(MW\) is similar to a positive definite matrix;
		\item[(b)] within each such stratum, the positive part is completely classified by the
		commutant of the positive matrix \(D\) in any decomposition \(MW=SDS^{-1}\).
	\end{enumerate}
	In particular, there are generally Holevo bases with \(W\neq I\), so the polar condition
	\eqref{eq:holevo-polar-condition} is sufficient but not necessary.
\end{remark}

\begin{corollary}
	\label{cor:unitary-factors-special-unitary}
	For \(P,Q\in\Pd\), set \(M:=P^{-1/2}Q^{1/2}\) and define the set of unitary
	factors arising from generalized fidelity by
	\[
	\mathcal U_{\mathrm{att}}(P,Q)
	:=
	\{W\in U(d): MW \text{ is similar to a positive definite matrix}\}.
	\]
	Then
	\[
	\mathcal U_{\mathrm{att}}(P,Q)\subseteq SU(d).
	\]
	By Theorem~\ref{thm:unitary-factor-classification-fixed-pair},
	\(\mathcal U_{\mathrm{att}}(P,Q)\) is exactly the set of unitary factors
	\(U_QU_P^*\) arising from generalized fidelity for the fixed pair \((P,Q)\).
	In particular, every unitary factor \(U_Q U_P^*\) arising from generalized fidelity has
	determinant \(1\). Moreover,
	\[
	\mathcal U_H(P,Q)\subseteq \mathcal U_{\mathrm{att}}(P,Q).
	\]
\end{corollary}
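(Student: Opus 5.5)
The plan is to prove the determinant constraint directly from condition (iv) of Theorem~\ref{thm:unitary-factor-classification-fixed-pair}. After that, the two remaining assertions are bookkeeping from the earlier theorems.

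Let \(W\in\mathcal U_{\mathrm{att}}(P,Q)\). Then \(MW\) is similar to a positive definite matrix, so all its eigenvalues are strictly positive reals. In particular
\[
\det(MW)=\det(M)\det(W)>0 .
\]
On the other hand,
\[
\det M=\det(P^{-1/2})\det(Q^{1/2})=\det(P)^{-1/2}\det(Q)^{1/2}>0,
\]
since \(P^{-1/2}\) and \(Q^{1/2}\) are positive definite. Dividing gives \(\det W>0\). Because \(W\) is unitary, \(|\det W|=1\), and the only positive real number of modulus one is \(1\). Hence \(\det W=1\), i.e.\ \(W\in SU(d)\).

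The statement that \(\mathcal U_{\mathrm{att}}(P,Q)\) is exactly the set of unitary factors \(U_QU_P^*\) is the equivalence (i)\(\Leftrightarrow\)(iii) of Theorem~\ref{thm:unitary-factor-classification-fixed-pair}, restated in the notation of this corollary. Combined with the inclusion just proved, it shows that every unitary factor arising from generalized fidelity has determinant \(1\).

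The inclusion \(\mathcal U_H(P,Q)\subseteq\mathcal U_{\mathrm{att}}(P,Q)\) is immediate from the definitions. Every \(W\in\mathcal U_H(P,Q)\) satisfies \(MW\sim\Pd\) by definition in Theorem~\ref{thm:complete-structural-classification-holevo-bases}, together with an additional trace condition, so it lies in \(\mathcal U_{\mathrm{att}}(P,Q)\).

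No step here is a real obstacle. The only point requiring care is that \(\det M\) is a positive real number rather than just nonzero; this holds because \(M\) is a product of two positive definite matrices, each with positive determinant.
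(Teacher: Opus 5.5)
Your proposal is correct and follows the paper's proof essentially verbatim: \(\det(MW)>0\) from similarity to a positive definite matrix, \(\det M>0\), hence \(\det W\) is a positive real of modulus one and so equals \(1\). The exactness claim via (i)\(\Leftrightarrow\)(iii) and the inclusion \(\mathcal U_H(P,Q)\subseteq\mathcal U_{\mathrm{att}}(P,Q)\) directly from the definition are handled the same way as in the paper.
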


\begin{proof}
	Let \(W\in\mathcal U_{\mathrm{att}}(P,Q)\). Then \(MW\) is similar to some \(D\in\Pd\), so
	\[
	\det(M)\det(W)=\det(MW)=\det(D)>0.
	\]
	Also,
	\[
	\det(M)=\det(P^{-1/2})\det(Q^{1/2})>0
	\]
	because \(P^{-1/2}\) and \(Q^{1/2}\) are positive definite. Therefore \(\det(W)\) is a
	positive real number. Since \(W\) is unitary, \(|\det(W)|=1\), hence \(\det(W)=1\).
	The final inclusion is immediate from the definition of \(\mathcal U_H(P,Q)\).
\end{proof}

\begin{corollary}
	\label{cor:negative-special-unitary-question}
	Let \(d\ge2\), and let \(\omega\ne1\) be a \(d\)-th root of unity. Then
	\[
	\omega I\in SU(d),
	\]
	but \(\omega I\) cannot arise as a unitary factor of generalized fidelity for any triple
	\(P,Q,R\in\Pd\). Consequently,
	\[
	\bigcup_{P,Q\in\Pd}\mathcal U_{\mathrm{att}}(P,Q)
	\subsetneq
	SU(d).
	\]
\end{corollary}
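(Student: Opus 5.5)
Since \(\omega^d=1\), one has \(\det(\omega I)=\omega^d=1\), and \(\omega I\) is unitary, so \(\omega I\in SU(d)\). To show that \(\omega I\) never arises, fix arbitrary \(P,Q\in\Pd\). By Theorem~\ref{thm:unitary-factor-classification-fixed-pair}, \(\omega I\) arises as a unitary factor for \((P,Q)\) if and only if \(M(\omega I)=\omega M\) is diagonalizable with strictly positive spectrum, where \(M=P^{-1/2}Q^{1/2}\). So it suffices to locate the spectrum of \(M\).

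The matrix \(M\) is itself similar to a positive definite matrix:
\[
P^{1/4}MP^{-1/4}=P^{-1/4}Q^{1/2}P^{-1/4}=H\in\Pd.
\]
Hence the eigenvalues of \(M\) are the positive reals \(\lambda_1,\dots,\lambda_d>0\) of \(H\). The eigenvalues of \(\omega M\) are therefore \(\omega\lambda_i\). Because \(\omega\ne1\) and \(|\omega|=1\), each \(\omega\lambda_i\) is not a positive real number: if \(\omega\lambda_i>0\), then \(\omega=\omega\lambda_i/\lambda_i\) would be a positive real of modulus \(1\), forcing \(\omega=1\). Thus condition (iv) of Theorem~\ref{thm:unitary-factor-classification-fixed-pair} fails, and \(\omega I\notin\mathcal U_{\mathrm{att}}(P,Q)\).

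Since \(P,Q\) were arbitrary (and any \(R\) yields a factor in \(\mathcal U_{\mathrm{att}}(P,Q)\) by that theorem), \(\omega I\) arises for no triple. Combined with the inclusion \(\mathcal U_{\mathrm{att}}(P,Q)\subseteq SU(d)\) from Corollary~\ref{cor:unitary-factors-special-unitary}, this gives the strict inclusion of the union into \(SU(d)\). A nontrivial \(d\)-th root of unity exists precisely because \(d\ge2\), for example \(\omega=e^{2\pi i/d}\). I expect no real obstacle here; the only point needing care is observing that \(M\) has positive spectrum, which follows from the similarity to \(H\).
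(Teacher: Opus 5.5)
Your proposal is correct and follows essentially the same route as the paper. You show that \(M\) is similar to \(H\), so \(\omega M\) has non-positive eigenvalues \(\omega\lambda_i\), and you rule out \(\omega I\) via Theorem~\ref{thm:unitary-factor-classification-fixed-pair}; you also make explicit the inclusion \(\mathcal U_{\mathrm{att}}(P,Q)\subseteq SU(d)\) that the paper's final strictness claim uses only implicitly.
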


\begin{proof}
	Fix arbitrary \(P,Q\in\Pd\), and write
	\[
	M:=P^{-1/2}Q^{1/2}.
	\]
	Then \(M\) is similar to a positive definite matrix. Indeed, with
	\[
	H:=P^{-1/4}Q^{1/2}P^{-1/4}\in\Pd,
	\]
	one has
	\[
	M=P^{-1/4}HP^{1/4}.
	\]
	Hence all eigenvalues of \(M\) are strictly positive real numbers.
	
	Now take \(W=\omega I\). Then
	\[
	MW=\omega M,
	\]
	so the eigenvalues of \(MW\) are \(\omega\lambda_1,\dots,\omega\lambda_d\), where
	\(\lambda_1,\dots,\lambda_d>0\) are the eigenvalues of \(M\). Since \(\omega\ne1\), these
	eigenvalues are not strictly positive real numbers. Therefore \(MW\) is not similar to a
	positive definite matrix.
	
	By Theorem~\ref{thm:unitary-factor-classification-fixed-pair}, \(W=\omega I\)
	cannot arise as a unitary factor of generalized fidelity for the fixed pair
	\(P,Q\). Since \(P,Q\) were arbitrary, \(\omega I\) cannot arise in this way
	for any triple \(P,Q,R\in\Pd\).
	
	Finally, \(\omega I\in SU(d)\) because \(\det(\omega I)=\omega^d=1\). Thus the
	global set of unitary factors arising from generalized fidelity is a proper
	subset of \(SU(d)\).
\end{proof}

\begin{remark}
	Afham and Ferrie proved that every generalized-fidelity unitary factor belongs to
	\(SU(d)\) and asked whether the reverse inclusion holds.
	Corollary~\ref{cor:unitary-factors-special-unitary} recovers the known inclusion in
	the present fixed-pair parameterization, while
	Corollary~\ref{cor:negative-special-unitary-question} shows that the reverse inclusion is
	false for every \(d\ge2\). Thus the correct global problem is not equality with \(SU(d)\),
	but rather the finer characterization of
	\[
	\bigcup_{P,Q\in\Pd}\mathcal U_{\mathrm{att}}(P,Q)
	=
	\bigcup_{P,Q\in\Pd}
	\{W\in U(d): P^{-1/2}Q^{1/2}W\sim\Pd\}.
	\]
	The fixed-pair classification in Theorem~\ref{thm:unitary-factor-classification-fixed-pair}
	provides the corresponding local answer.
\end{remark}

\section{Discussion}
\label{sec:discussion}

We have proved monotonicity of the one-sided and symmetrized \(x\)-polar families on the
two ranges \((-\infty,1]\) and \([1,\infty)\). We have also shown that, for each fixed
positive definite pair, every \(z\)-fidelity value in the data-processing regime
\(z\ge \frac12\), as well as every Log-Euclidean fidelity value, admits a pointwise
generalized-fidelity representation with pair-dependent base. In addition, we obtained a
structural classification of Holevo bases for a fixed pair of positive definite matrices,
and a fixed-pair classification of all unitary factors arising from generalized fidelity.

The monotonicity theorem strengthens the numerical observation of Afham--Ferrie from
\([-1,1]\) to the whole real line: the one-sided polar paths and the symmetrized family are
nondecreasing on \((-\infty,1]\) and nonincreasing on \([1,\infty)\). Together with the
exact realization of the interval \([\FM(P,Q),\FU(P,Q)]\), this gives a
positive-definite, pointwise, and pair-dependent answer to Open Problem~3 of
\cite{AfhamFerrie2024} for all \(z\ge \frac12\) and for the Log-Euclidean fidelity. On the
other hand, Theorem~\ref{thm:recover-z-from-generalized}(iii) shows that such a pointwise
generalized-fidelity recovery fails in general in dimensions \(d\ge2\) for
\(0<z<\frac12\), even at the level of interior fidelities.

The Holevo-base results solve the fixed-pair equality problem
\[
F_R(P,Q)=F^{\mathrm H}(P,Q),
\]
showing that the sufficient polar condition captures only one distinguished stratum of the
full solution set. The polar slice corresponds exactly to the unitary stratum \(W=I\), while
Example~\ref{ex:nonpolar-holevo-base} shows that nonpolar Holevo bases do occur.

The classification of unitary factors also settles the special-unitary reverse-inclusion
question in the negative in every dimension \(d\ge2\). Although every generalized-fidelity
unitary factor lies in \(SU(d)\), not every element of \(SU(d)\) can arise as such a
unitary factor when \(d\ge2\). In fact, for every \(d\ge2\), every nontrivial central
special unitary \(\omega I\), with \(\omega^d=1\) and \(\omega\ne1\), is excluded by
the fixed-pair criterion
\[
W\text{ can arise as a unitary factor for }(P,Q)
\iff
P^{-1/2}Q^{1/2}W\sim\Pd.
\]
Thus Open Problem~7 of \cite{AfhamFerrie2024} has a negative answer in every dimension
\(d\ge2\).

The main questions that remain open are therefore of a different nature. On the
polar-fidelity side, the principal unresolved issue is full data-processing monotonicity
beyond commutative interfaces, together with related convexity questions for generalized
Bures distance. On the generalized-fidelity side, the global set of arising unitary factors
\[
\bigcup_{P,Q\in\Pd}
\{W\in U(d):P^{-1/2}Q^{1/2}W\sim\Pd\}
\]
now appears as the natural object to characterize more explicitly. It would also be
interesting to determine whether semidefinite-program formulations are available for the
real part of generalized fidelity or for the associated optimization problems.

\end{document}